\tikzset{>=latex, 
	point/.style = {circle,draw,thick,minimum size=2mm,inner sep=0pt},
	point1/.style = {circle,draw,thick,minimum size=6mm,inner sep=0pt},
	hm/.style = {dotted,semithick},
	role/.style = {thick},
	tree/.style = {rounded corners=10pt, dashed, fill opacity=0.5, fill=nullscolour},
	wiggly/.style={thick,
	},
	query/.style={thick},
	itria/.style={
  draw,dashed,shape border uses incircle,
  isosceles triangle,shape border rotate=90,yshift=-1.45cm},
  square/.style={regular polygon,regular polygon sides=4}
}
\newcommand{\sig}{\textit{sig}}
\newcommand{\sub}{\textit{sub}}
\newcommand{\K}{\mathsf{K}}
\newcommand{\KF}{\mathsf{K4}}
\newcommand{\wKF}{\mathsf{wK4}}
\newcommand{\SF}{\mathsf{S4}}
\newcommand{\SFI}{\mathsf{S5}}
\newcommand{\DL}{\mathsf{DL}}
\newcommand{\bis}{\boldsymbol{\beta}}
\newcommand{\type}{t}
\newcommand{\p}{\boldsymbol{w}}
\newcommand{\Type}{\tau}
\newcommand{\at}{\textit{at}}
\newcommand{\Rs}{R^s}
\newcommand{\Rr}{R^r}
\newcommand{\opluss}{\!\oplus\!}
\newcommand{\avar}{\mathsf{a}}
\newcommand{\bvar}{\mathsf{b}}
\newcommand{\evar}{\mathsf{e}}
\newcommand{\lvar}{\mathsf{level}}
\newcommand{\tvar}{\mathsf{t}}
\newcommand{\grid}{\mathsf{grid}}
\newcommand{\gridx}{\mathsf{succ}_x}
\newcommand{\gridy}{\mathsf{succ}_y}
\newcommand{\atype}{a}
\newcommand{\btype}{b}
\newcommand{\simr}{\sim^\varrho}
\newcommand{\equC}{\approx}
\newcommand{\cC}[1]{[{#1}]}
\newcommand{\TC}[1]{T^\varrho[{#1}]}
\newcommand{\ATC}[1]{AT^\varrho[{#1}]}
\newcommand{\ftype}{\boldsymbol{t}}
\newcommand{\MM}[1]{M_{[{#1}]}}
\newcommand{\iset}[1]{I_{#1}}
\newcommand{\rr}{r}
\newcommand{\dom}{\ensuremath{{\sf dom}}\xspace}
\newcommand{\PSpace}{\textnormal{\sc PSpace}\xspace}
\newcommand{\coNExpTime}{\textnormal{\sc coNExpTime}\xspace}
\newcommand{\NExpTime}{\textnormal{\sc NExpTime}\xspace}
\begin{document}

\begin{frontmatter}
  \title{The Interpolant Existence Problem for Weak K4 and Difference Logic}
   \author{Agi Kurucz}
  \address{Department of Informatics\\
  King's College London, U.K.}
  \author{Frank Wolter}
  \address{Department of Computer Science\\
  University of Liverpool, U.K.}
    \author{Michael Zakharyaschev}
  \address{School of Computing and Mathematical Sciences\\
  Birkbeck, University of London, U.K.}
\begin{abstract}
As well known, weak $\KF$ and the difference logic $\DL$ do not enjoy the Craig interpolation property. Our concern here is the problem of deciding whether any given implication does have an interpolant in these logics. We show that the nonexistence of an interpolant can always be witnessed by a pair of bisimilar models of polynomial size for $\DL$ and of triple-exponential size for weak $\KF$, and so the interpolant existence problems for these logics are decidable in \textsc{coNP} and \textsc{coN3ExpTime}, respectively. We also establish \textsc{coNExpTime}-hardness of this problem for weak $\KF$, which is higher than the \textsc{PSpace}-completeness of its decision problem. 
\end{abstract}

  \begin{keyword}
Craig interpolant, propositional modal logic, computational complexity.
  \end{keyword}
 \end{frontmatter}


\section{Introduction}

Weak $\KF$ is the modal logic one obtains when the $\Diamond$-operator of the propositional classical (uni)modal language is interpreted by the derivative operation\footnote{The \emph{derived set} of a subset $X$ of a topological space comprises all limit points of $X$, i.e., those $x$ all of whose neighbourhoods contain a point in $X$ different from $x$.} in topological spaces~\cite{Esakia2001}---rather than the more conventional topological closure, which results in classical $\SF$~\cite{McKinsey&Tarski44}. 
In terms of Kripke semantics, weak $\KF$ is characterised~\cite{Esakia2001} by the class of weakly transitive frames, i.e., those $\mathfrak F = (W,R)$ that satisfy the condition 
\begin{equation}\label{weaktr}
\forall x,y,z \in W\, \big( xRyRz \to (x=z) \vee xRz \big),
\end{equation}
which explains the moniker `weak $\KF$' or $\wKF$ for this logic. 
Syntactically, $\wKF$ is obtained by adding the axiom $\Diamond\Diamond p \rightarrow (p \vee \Diamond p)$ to the basic normal modal logic $\K$.
A notable extension of $\wKF$ is the difference logic $\DL$, which goes back to the `logic of elsewhere' \cite{vonWright79,Segerberg1980-SEGANO-2} and can be axiomatised by adding  
the Brouwersche axiom $p \to \Box \Diamond p$ to $\wKF$. While an arbitrary frame for $\DL$ is a symmetric and weakly transitive relation, $\DL$ is also known to be characterised by the class of \emph{difference frames} (that is, Kripke frames of the form $(W, \ne)$)~\cite{Segerberg1980-SEGANO-2}. 


Despite their apparent similarity to $\KF$, $\SF$ and $\SFI$, the logics $\wKF$ and $\DL$ have---or rather lack---one important feature: they do not enjoy the Craig interpolation property (CIP)~\cite{Karpenko&Maksimova2010}, according to which each valid implication $\varphi \to \psi$ in a logic $L$ has an interpolant in $L$, viz.\ a formula $\iota$ built from common variables of $\varphi$ and $\psi$ such that $(\varphi \to \iota) \in L$ and $(\iota \to \psi) \in L$; if $\varphi$ and $\psi$ have no variables in common, variable-free interpolant $\iota$ is built from the logical constants $\top$ and $\bot$. 


\begin{example}\label{ex:basic}
In the pictures below, $\bullet$ always denotes an irreflexive point, $\circ$ a reflexive one, and an ellipse represents a cluster
(a set of points, in which any two distinct ones `see' each other). 

$(i)$ Consider the following formulas without common variables:
\[
\varphi = \Diamond\Diamond p \land \neg \Diamond p, \qquad  \psi = \Diamond \Diamond \neg q \lor q \ (\equiv \Box\Box q\rightarrow q).
\]
It is easy to see that $\varphi \to \psi$ is true in all models based on weakly transitive frames, and so $(\varphi \to \psi) \in \wKF$. On the other hand, the picture below shows models $\mathfrak M_\varphi$ and $\mathfrak M_\psi$ based on weakly transitive frames such that $\mathfrak M_\varphi,\rr_\varphi \models \varphi$, $\mathfrak M_\psi,\rr_\psi \models \neg \psi$, and the 
universal relation $\bis$ between the points of $\mathfrak M_\varphi$ and $\mathfrak M_\psi$ is a $\varrho$-bisimulation for the shared signature $\varrho = \emptyset$ of $\varphi$ and $\psi$, with $\rr_\varphi \bis \rr_\psi$ (see Sec.~\ref{prelims} for definitions).\\
%
%
\centerline{\begin{tikzpicture}[>=latex,line width=0.4pt,xscale = 1,yscale = .9]
\node at (-1,-.6) {$\mathfrak M_\varphi$};
\draw[] (0,0) ellipse (.7 and .9);
\node[point,scale = 0.6,label=left:{\footnotesize $x_1$},label=right:\!{\footnotesize $\neg p$}] (f1) at (0,.4) {};
\node[point,fill=black,scale = 0.6,label=left:{\footnotesize $\rr_\varphi$},label=right:{\footnotesize $p$}] (f0) at (0,-.4) {};
\node at (4,-.6) {$\mathfrak M_\psi$};
\node[point,scale = 0.6,label=left:{\footnotesize $x_2$},label=right:{\footnotesize $q$}] (p1) at (3,.4) {};
\node[point,fill=black,scale = 0.6,label=left:{\footnotesize $\rr_\psi$},label=right:\!{\footnotesize $\neg q$}] (p0) at (3,-.4) {};
\draw[->] (p0) to (p1);
\end{tikzpicture}
}
%
It follows that there is no variable-free formula $\iota$ with $(\varphi \to \iota) \in \wKF$ and $(\iota \to \psi) \in \wKF$ 
because $\varrho$-bisimulations preserve the truth-values of $\varrho$-formulas.

$(ii)$ Consider the formulas 
\[
\varphi =  \Diamond\Diamond p \land \neg \Diamond p\land\neg\Diamond\Diamond\Diamond p, \qquad  
 \psi = \Diamond\Diamond s \land \neg \Diamond s\to\neg\bigl(\Diamond(q\land\Diamond s)\land\Diamond(\neg q\land\Diamond s)\bigr).
\]
It is easy to see that $\varphi \to \psi$ is true in all models based on a weakly transitive frame, and so $(\varphi \to \psi) \in \wKF$.\\
\centerline{\begin{tikzpicture}[>=latex,line width=0.4pt,xscale = 1,yscale = .9]
\node at (-1,-.6) {$\mathfrak M_\varphi$};
\draw[] (0,0) ellipse (.7 and .9);
\node[point,fill=black,scale = 0.6,label=right:\!{\footnotesize $\neg p$}] (f1) at (0,.4) {};
\node[point,fill=black,scale = 0.6,label=left:{\footnotesize $\rr_\varphi$},label=right:{\footnotesize $p$}] (f0) at (0,-.4) {};
\node at (4.4,-.6) {$\mathfrak M_\psi$};
\draw[] (3.1,0) ellipse (.95 and 1.2);
\node[point,fill=black,scale = 0.6,label=right:\!{\footnotesize $\neg s,q$}] (p2) at (3,.7) {};
\node[point,fill=black,scale = 0.6,label=right:\!{\footnotesize $\neg s,\neg q$}] (p1) at (3,0) {};
\node[point,fill=black,scale = 0.6,label=left:{\footnotesize $\rr_\psi$},label=right:{\footnotesize $s$}] (p0) at (3,-.7) {};
\end{tikzpicture}
}
%
On the other hand, the models $\mathfrak M_\varphi$ and $\mathfrak M_\psi$ above are 
based on difference frames with $\mathfrak M_\varphi,\rr_\varphi \models \varphi$, $\mathfrak M_\psi,\rr_\psi \models \neg \psi$, and 
the universal relation $\bis$ between the points of $\mathfrak M_\varphi$ and $\mathfrak M_\psi$ is a $\varrho$-bisimulation
with $\rr_\varphi \bis \rr_\psi$, for the shared signature $\varrho = \emptyset$ of $\varphi$, $\psi$.
Therefore, $\varphi$ and $\psi$ do not have an interpolant in any logic between $\wKF$ and $\DL$. \hfill $\dashv$
\end{example}

Our concern in this paper is the following 
\begin{description}
\item[\emph{interpolant existence problem} (\emph{IEP}) for $L \in \{\wKF, \DL\}$:]  given formulas $\varphi$ and $\psi$, decide whether $\varphi \to \psi$ has an interpolant in $L$. 
\end{description}
We show that the IEP for $\wKF$ is decidable in \textsc{coN3ExpTime}, being \textsc{coNExpTime}-hard (harder than its decision problem), while the IEP for $\DL$ is \textsc{coNP}-complete (as is its decision problem). 

In Section~\ref{prelims}, we introduce the necessary technical tools and demonstrate them in the case of $\DL$. Then we focus on the much more involved IEP for $\wKF$, establishing the upper bound in Section~\ref{sec:deciding} and the lower one in Section~\ref{lower}.
Finally, we discuss related and open problems in  Section~\ref{discussion}.
 

\section{Preliminaries}\label{prelims}

All (Kripke) frames $\mathfrak F = (W,R)$ we deal with in this paper are assumed to be \emph{weakly transitive}~\eqref{weaktr}. 
By a \emph{cluster} in $\mathfrak F$ we mean any set of the form 
$$
C(x) = \{x\} \cup \{ y \in W \mid xRy \land y R x\}, \quad x\in W. 
$$
A cluster $\mathfrak F$ can contain \emph{reflexive} points $y$, for which $yRy$, as well as \emph{irreflexive} points $z$, for which $\neg (zRz)$. A cluster with a single point, which is irreflexive, is said to be \emph{degenerate}. Given $x,y \in W$, we write $xR^sy$ iff $xRy$ and $C(x) \ne C(y)$.

Suppose $C$ and $C'$ are clusters in $\mathfrak F$ and $x \in W$. We write $CRx$ if there exists $y\in C$ such that $yRx$, and $CRC'$ if there are $x\in C$ and $y\in C'$ with $xRy$. Observe that $CRC$ iff $C$ is non-degenerate. 
We write $C\Rs x$ if $CRx$ and $x\not\in C$, and $C\Rs C'$ if $CRC'$ and $C \ne C'$. Thus, $\Rs$ is a strict partial order on the set $W_c$ of clusters in $\mathfrak F$. The reflexive closure of $\Rs$ is denoted by $\Rr$. The frame $\mathfrak{F}$ is called \emph{rooted} if there is $r \in W$, a \emph{root} of $\mathfrak F$, with $W_c = \{C(x) \mid C(r) R^r C(x)\}$. 

By a \emph{signature} we mean any \emph{finite} set of propositional variables, $p_i$. Given a signature $\sigma$, a $\sigma$-\emph{formula} is built from variables in $\sigma$ and logical constants $\bot$, $\top$ using the Boolean connectives $\land$, $\neg$ and the modal possibility operator $\Diamond$. The other Boolean connectives and the necessity operator $\Box$ are regarded as standard abbreviations. We denote by $\sig(\varphi)$ the set of variables in a formula $\varphi$ and by $\sub(\varphi)$ the set of subformulas of $\varphi$ together with their negations,
setting $|\varphi|=|\sub(\varphi)|$. 
We also use abbreviations $\Diamond^+\varphi= \varphi\lor\Diamond\varphi$, $\Box^+\varphi=\varphi\land\Box\varphi$ and $\Diamond\Gamma=\{\Diamond\varphi\mid\varphi\in\Gamma\}$, for a set $\Gamma$ of formulas.
 

A $\sigma$-\emph{model} based on a frame $\mathfrak{F}=(W,R)$ is a pair $\mathfrak{M}=(\mathfrak{F},\mathfrak{v})$ with a \emph{valuation} $\mathfrak{v} \colon \sigma \to 2^W$. 
%
%
The \emph{truth-relation} $\mathfrak M,x \models \varphi$, for $x \in W$ and a $\sigma$-formula $\varphi$, is defined by induction as usual in Kripke semantics (e.g., $\mathfrak M,x \models \Diamond \varphi$ iff $\mathfrak M,y \models \varphi$, for some $y \in W$ with $xRy$). 
%
For any $\varrho \subseteq \sigma$, the \emph{$\varrho$-type of $x \in W$ in} $\mathfrak{M}$ is the set $t^\varrho_{\mathfrak{M}}(x)$ of all $\varrho$-formulas that are true at $x$ in $\mathfrak M$,
and the \emph{atomic $\varrho$-type of $x \in W$ in} $\mathfrak{M}$ is $\at^\varrho_{\mathfrak{M}}(x)=t^\varrho_{\mathfrak{M}}(x)\cap\varrho$. 
 For a set $X$ of points in $\mathfrak M$, we let
$t_{\mathfrak{M}}^{\varrho}(X)=\bigl\{t_{\mathfrak{M}}^{\varrho}(x)\mid x\in X\bigr\}$
and $\at_{\mathfrak{M}}^{\varrho}(X)=\bigl\{\at_{\mathfrak{M}}^{\varrho}(x)\mid x\in X\bigr\}$. 
A set $\Gamma$ of $\sigma$-formulas is \emph{finitely satisfiable} in $\mathfrak M$ if, for every finite $\Gamma' \subseteq \Gamma$, there is $x' \in W$ such that $\Gamma' \subseteq t^\sigma_{\mathfrak{M}}(x')$; $\Gamma$ is \emph{satisfiable in} $\mathfrak M$ if $\Gamma \subseteq t^\sigma_{\mathfrak{M}}(x)$, for some $x \in W$. 

A $\sigma$-model $\mathfrak M$ is \emph{descriptive} if, for any $x,y \in W$ and any set $\Gamma$ of $\sigma$-formulas,
\begin{description}
\item[(dif)] $x = y$ iff $t^\sigma_{\mathfrak{M}}(x) = t^\sigma_{\mathfrak{M}}(y)$,

\item[(ref)] $xRy$ iff $\Diamond t^\sigma_{\mathfrak{M}}(y) \subseteq t^\sigma_{\mathfrak{M}}(x)$ iff 
$\{\varphi\mid\Box\varphi\in t^\sigma_{\mathfrak{M}}(x)\}\subseteq t^\sigma_{\mathfrak{M}}(y)$,

\item[(com)] if $\Gamma$ is finitely satisfiable in $\mathfrak M$, then $\Gamma$ is satisfiable in $\mathfrak M$. 
\end{description}
(In other words, descriptive models are based on finitely generated  descriptive frames for $\wKF$~\cite{DBLP:books/daglib/0030819}.) 
%
We remind the reader that, for any $\sigma$-formula $\varphi$, we have $\varphi \in \wKF$ iff $\neg \varphi$ is not satisfiable in a (finite) $\sigma$-model iff $\neg \varphi$ is not satisfiable in a (finite) descriptive $\sigma$-model. The finite model property of $\wKF$ was established in~\cite{DBLP:conf/tbillc/BezhanishviliEG09,DBLP:journals/ndjfl/BezhanishviliGJ11,kudinov2024filtrations}. The decision problem for $\wKF$ is $\PSpace$-complete~\cite{DBLP:journals/tocl/Shapirovsky22}. 
For the \emph{difference logic} $\DL$, we have $\varphi \in \DL$ iff $\neg \varphi$ is not satisfiable in a
polynomial-size $\sigma$-model based on a frame for $\DL$; 
 the decision problem for $\DL$ is \textsc{coNP}-complete~\cite{DBLP:journals/jsyml/Rijke92}.

In the remainder of this section, we present the technical tools and results we need for deciding the interpolant existence problem for $\DL$ and $\wKF$.
Given $\varrho \subseteq \sigma$, we call a cluster $C$ \emph{$\varrho$-maximal} in $\mathfrak{M}$ if, for any $x \in C$ and $y \in W$, whenever $CR y$ and $t^\varrho_{\mathfrak{M}}(x) = t^\varrho_{\mathfrak{M}}(y)$, then $y \in C$. 
The following fundamental properties of descriptive $\sigma$-models for $\wKF$ are similar to the corresponding well-known properties of finitely generated descriptive frames for $\KF$~\cite{DBLP:journals/jsyml/Fine74,DBLP:books/daglib/0030819}:


\begin{lemma}\label{l:fundament}
Suppose $\mathfrak M$ is a descriptive $\sigma$-model based on some $\wKF$-frame $(W,R)$,
$\varrho \subseteq \sigma$, $C$ is a cluster in $\mathfrak M$, and $\Gamma$ a set of $\sigma$-formulas. Then the following hold\textup{:}
\begin{description}
\item[$(a)$] $|C| \le 2^{|\sigma|}$\textup{;}

\item[$(b)$] if $\mathfrak M, x\models \Diamond \bigwedge\Gamma'$ for every finite $\Gamma' \subseteq \Gamma$, then there is $y$ such that $xRy$ and $\mathfrak M, y \models \Gamma$\textup{;} 

\item[$(c)$]
if $\Diamond t^\varrho_{\mathfrak M}(z)\subseteq t^\varrho_{\mathfrak M}(x)$, then there is $y$ such that $xRy$ and 
$t^\varrho_{\mathfrak M}(y)= t^\varrho_{\mathfrak M}(z)$\textup{;}

\item[$(d)$] there exists a $\varrho$-maximal cluster $C'$ such that $C\Rr C'$ and $t^\varrho_{\mathfrak M}(C)\subseteq t^\varrho_{\mathfrak M}(C')$.
\end{description}
\end{lemma}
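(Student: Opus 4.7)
The plan is to adapt, via the descriptive axioms (dif), (ref), (com), the classical Fine-style structural analysis of descriptive $\KF$-frames to the weakly transitive setting. For (a), I will show that the map $x\mapsto \at^\sigma_{\mathfrak M}(x)$ is injective on $C$, which immediately gives $|C|\le 2^{|\sigma|}$. Suppose $x,y\in C$ with $x\ne y$ and $\at^\sigma_{\mathfrak M}(x)=\at^\sigma_{\mathfrak M}(y)$. Then $|C|\ge 2$, so $C$ is non-degenerate and any two distinct points of $C$ see each other. An induction on formula structure shows $t^\sigma_{\mathfrak M}(x)=t^\sigma_{\mathfrak M}(y)$: the only nontrivial step is $\Diamond\psi$, where any witness $z$ with $xRz$ and $\mathfrak M,z\models\psi$ lies either outside $C$ (and then $yRz$ by weak transitivity applied to $yRxRz$), or in $C\setminus\{x,y\}$ (and $yRz$ by the cluster structure), or is $z=y$ itself; in the last case, either $y$ is reflexive, or we use the inductive hypothesis to transfer $\psi$ from $y$ to $x$ and then exploit $yRx$. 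Then (dif) forces $x=y$, a contradiction.

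For (b), I will apply (com) to the set $\Delta\cup\Gamma$ with $\Delta=\{\varphi\mid \Box\varphi\in t^\sigma_{\mathfrak M}(x)\}$. For each finite $\Gamma'\subseteq\Gamma$, the hypothesis supplies some $y'$ with $xRy'$ and $\mathfrak M,y'\models\bigwedge\Gamma'$; by (ref), this $y'$ also satisfies $\Delta$, so $\Delta\cup\Gamma$ is finitely satisfiable in $\mathfrak M$. Applying (com) yields a point $y$ realizing $\Delta\cup\Gamma$, and (ref) then gives $xRy$.

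For (c), the plan is a Zorn-style argument on the poset $\mathcal G$ of all clusters $D$ with $C\Rr D$ and $t^\varrho_{\mathfrak M}(C)\subseteq t^\varrho_{\mathfrak M}(D)$, ordered by $D\preceq D'$ iff $D\Rr D'$ and $t^\varrho_{\mathfrak M}(D)\subseteq t^\varrho_{\mathfrak M}(D')$. By (a), each profile $t^\varrho_{\mathfrak M}(D)$ has size at most $2^{|\sigma|}$, so along any $\preceq$-chain the profiles are eventually constant; a chain upper bound is then produced by applying (com) to a formula set expressing ``a cluster $\Rr$-above every chain member in which every type of the common eventual profile is realized''. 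Given a $\preceq$-maximal $C'\in\mathcal G$, I will argue $C'$ is $\varrho$-maximal: any failure would supply $y\notin C'$ with $C'Ry$ and $t^\varrho_{\mathfrak M}(y)\in t^\varrho_{\mathfrak M}(C')$, and a further application of (com) would then produce a cluster $\Rs$-above $C'$ in which every type of $t^\varrho_{\mathfrak M}(C')$ is realized, contradicting $\preceq$-maximality.

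The main obstacle throughout (c) is the repeated need to find not merely a single point but an entire cluster above a prescribed region in which a specified finite family of $\varrho$-types is simultaneously realized; encoding this demand as one finitely satisfiable set of $\sigma$-formulas---so that (com) delivers the required cluster in a single step---is the delicate part of the argument, and is where the weakly transitive setting most visibly diverges from the finite-model reasoning one would use on a $\wKF$-frame.
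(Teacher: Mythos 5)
Parts $(a)$ and $(b)$ of your proposal are correct and essentially the paper's own argument: $(a)$ via injectivity of $x\mapsto \at^\sigma_{\mathfrak M}(x)$ on a cluster plus \textbf{(dif)}, and $(b)$ by applying \textbf{(com)} to $\Gamma\cup\{\chi\mid\Box\chi\in t^\sigma_{\mathfrak M}(x)\}$ and then \textbf{(ref)}.

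The problem is in $(c)$, and it sits exactly at the step you defer as ``the delicate part''. Your order $D\preceq D'$ demands $t^\varrho_{\mathfrak M}(D)\subseteq t^\varrho_{\mathfrak M}(D')$, so both the upper bound for a chain and the refutation of a $\preceq$-maximal but not $\varrho$-maximal $C'$ require you to produce an entire cluster in which a whole (finite) family of $\varrho$-types is simultaneously realised. But \textbf{(com)} only yields a single point satisfying a set of $\sigma$-formulas, and no such set evaluated at one point can force its $\Diamond$-witnesses to lie in that point's cluster: from $z\models\Diamond\gamma$ for all $\gamma$ in a type you only get, via $(b)$, a successor of $z$ realising that type, which may lie strictly $R^s$-above $C(z)$, and iterating nested $\Diamond$'s just produces ascending configurations, never cluster co-membership (there is no formula playing the role of ``sees back to $z$''). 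So the encoding you hope for is not available, and without it neither the limit stage nor the final contradiction goes through; indeed, the existence of a cluster realising the whole profile is essentially what $(c)$ asserts, so the plan is circular at this point. The paper sidesteps this by tracking a single type only: fix $x\in C$ and take a $\subseteq$-maximal $R^s$-chain of successors of $x$ all of $\varrho$-type $t^\varrho_{\mathfrak M}(x)$ (if the chain had no top, \textbf{(com)} applied to $t^\varrho_{\mathfrak M}(x)\cup\{\chi\mid\Box\chi\in t^\sigma_{\mathfrak M}(y),\ y\ \text{in the chain}\}$ together with \textbf{(ref)} would extend it); the cluster $C'$ of the top element is $\varrho$-maximal by maximality of the chain, and only afterwards is $t^\varrho_{\mathfrak M}(C)\subseteq t^\varrho_{\mathfrak M}(C')$ deduced: for $w\in C$ one has $\Diamond t^\varrho_{\mathfrak M}(w)\subseteq t^\varrho_{\mathfrak M}(x)=t^\varrho_{\mathfrak M}(y)$ for $y\in C'$, so $(b)$ gives a successor of $y$ of type $t^\varrho_{\mathfrak M}(w)$, which $\varrho$-maximality of $C'$ forces into $C'$. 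To repair your argument, replace profile-tracking by this single-type tracking and recover the profile inclusion from $\varrho$-maximality at the end.
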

\begin{proof}
$(a)$ It is easy to see that if $x,y \in C$ and $\at^\sigma_{\mathfrak{M}}(x) = \at^\sigma_{\mathfrak{M}}(y)$, then $t^\sigma_{\mathfrak{M}}(x) = t^\sigma_{\mathfrak{M}}(y)$.
It follows by {\bf (dif)} that $|C| \le 2^{|\sigma|}$.

$(b)$ By the assumption, the set $\Xi = \Gamma \cup \{ \varphi \mid \Box\varphi \in t^\sigma_{\mathfrak M}(x) \}$ is finitely satisfiable, and so, by {\bf (com)}, $\mathfrak{M},y\models \Xi$, for some $y$, with $xRy$ by {\bf (ref)}.

$(c)$ is a straightforward consequence of $(b)$ with $\Gamma=t^\varrho_{\mathfrak M}(z)$.

$(d)$ Take any $x \in C$ and consider the set
\begin{multline*}
\mathcal{X}=\{X'\subseteq W\mid \mbox{$x\in X'$, $t^\varrho_{\mathfrak M}(x') = t^\varrho_{\mathfrak M}(x)$ for all $x' \in X'$, and}\\
R^s\cap(X'\times X')\ \mbox{is a strict linear order with smallest element $x$}\}.
\end{multline*}
By Zorn's lemma, there is a $\subseteq$-maximal set $X^\dag$ in $\mathcal{X}$. 
We claim that there is an $R^s$-maximal point in $X^\dag$. Indeed, suppose otherwise.
Then the set
\[
\Xi = t^\varrho_{\mathfrak M}(x) \cup \{\varphi \mid \Box\varphi \in t^\sigma_{\mathfrak M}(y) \mbox{ for some $y\in X^\dag$}\}
\]
is finitely satisfiable. By {\bf (com)}, there is $z$ satisfying $\Xi$, so $t^\varrho_{\mathfrak{M}}(z) =  t^\varrho_{\mathfrak{M}}(x)$ and, by {\bf (ref)}, $yRz$ for all $y\in X^\dag$. Two cases are possible now. $(i)$ If $yR^sz$ for all $y\in X^\dag$, then $X^\dag\cup\{z\}\in\mathcal{X}$, and so 
$z\in X^\dag$ by the $\subseteq$-maximality of $X^\dag$ in $\mathcal{X}$. Thus, $z$ is an $R^s$-maximal point in 
$X^\dag$, contrary to our assumption. $(ii)$ If $zRy$, for some $y\in X^\dag$, then by our assumption there is $y'\in X^\dag$ with $yR^s y'$, and so $y'Rz$. By \eqref{weaktr}, either $y'=y$ or $y'R y$, contrary to $yR^s y'$.

Now, let $x^\dag$ be an $R^s$-maximal point in $X^\dag$.
We claim that $C(x^\dag)$ is $\varrho$-maximal. Suppose otherwise and there exists $z \in C(x^\dag)$ with an $R$-successor $z'\notin C(x^\dag)$ such that $t^\varrho_{\mathfrak M}(z) = t^\varrho_{\mathfrak M}(z')$. Then $x^\dag R^s z'$.
Since $z \in C(x^\dag)$, either $(i)$ $z=x^\dag$, and so  
$t^\varrho_{\mathfrak M}(z') = t^\varrho_{\mathfrak M}(x^\dag) =t^\varrho_{\mathfrak M}(x)$,
or $(ii)$ $zRx^\dag$, and so 
$\Diamond t^\varrho_{\mathfrak M}(x)=\Diamond t^\varrho_{\mathfrak M}(x^\dag)\subseteq t^\varrho_{\mathfrak M}(z)=
t^\varrho_{\mathfrak M}(z')$. By $(c)$, this implies  that there is an $R$-successor $z''$ of $z'$ with $t^\varrho_{\mathfrak M}(z'') = t^\varrho_{\mathfrak M}(x)$.
So in both $(i)$ and $(ii)$, there is an
$R^s$-successor $z''$ of $x^\dag$ with $t^\varrho_{\mathfrak M}(z'') = t^\varrho_{\mathfrak M}(x)$.
By the $R^s$-maximality of $x^\dag$ in $X^\dag$, it follows that $X^\dag\cup\{z''\}\in\mathcal{X}$ and
$z''\notin X^\dag$, contrary to the $\subseteq$-maximality of $X^\dag$ in $\mathcal{X}$.

A similar argument shows that $t^\varrho_{\mathfrak M}(C(x))\subseteq t^\varrho_{\mathfrak M}\bigl(C(x^\dag)\bigr)$. Let $z\in C(x)$. If $z=x$, then clearly  
$t^\varrho_{\mathfrak M}(z)\in t^\varrho_{\mathfrak M}\bigl(C(x^\dag)\bigr)$. So suppose $z\ne x$ and hence 
$\Diamond t^\varrho_{\mathfrak M}(z)\subseteq t^\varrho_{\mathfrak M}(x)=t^\varrho_{\mathfrak M}(x^\dag)$.
By $(c)$, there is an $R$-successor $z'$ of $x^\dag$ with $t^\varrho_{\mathfrak M}(z') = t^\varrho_{\mathfrak M}(z)$. We claim that $z'\in C(x^\dag)$. Indeed, otherwise we have $x^\dag R^s z'$. As  $\Diamond t^\varrho_{\mathfrak M}(x)\subseteq t^\varrho_{\mathfrak M}(z)=t^\varrho_{\mathfrak M}(z')$, by $(c)$ there is an $R$-successor $z''$ of $z'$ with $t^\varrho_{\mathfrak M}(z'') = t^\varrho_{\mathfrak M}(x)$.
By \eqref{weaktr}, it follows that $x^\dag R^s z''$. 
Thus, by the $R^s$-maximality of $x^\dag$ in $X^\dag$, we have $X^\dag\cup\{z''\}\in\mathcal{X}$ and
$z''\notin X^\dag$, contrary to the $\subseteq$-maximality of $X^\dag$ in $\mathcal{X}$.

As $C(x)R^r C(x^\dag)$, cluster $C'=C(x^\dag)$ is as required.
%
%
\end{proof}

Let $\mathfrak{M}_i$, $i=1,2$, be $\sigma$-models based on frames $\mathfrak F_i = (W_i,R_i)$ for $\wKF$ and let $\varrho \subseteq \sigma$. 
A relation $\bis \subseteq W_1 \times W_2$ is called a $\varrho$-\emph{bisimulation} between $\mathfrak M_1$ and $\mathfrak M_2$ in case  the following conditions hold: whenever $x_1 \bis x_2$, 
\begin{description}
\item[(atom)] 
$\at_{\mathfrak{M}_1}^{\varrho}(x_1) =\at_{\mathfrak{M}_2}^{\varrho}(x_2)$\textup{;}

\item[(move)] if $x_1R_1y_1$, then there is $y_2$ such that $x_2R_2y_2$ and $y_1 \bis y_2$; and, conversely, if $x_2R_2 y_2$, then there is $y_1$ with $x_1R_1 y_1$ and $y_1 \bis y_2$.
\end{description}
If there is such $\bis$ with $z_1\bis z_2$, we write $\mathfrak{M}_1,z_1 \simr \mathfrak{M}_2,z_2$. 
The following characterisation of bisimulations between descriptive models in terms of types is well-known; see~\cite{goranko20075} and references therein: 

%
%

\begin{lemma}\label{bisim-lemma}
%
%
For any $\varrho \subseteq \sigma$, descriptive $\sigma$-models $\mathfrak{M}_i$, $i = 1,2$, and $x_i \in W_i$,
\[
 t_{\mathfrak{M}_1}^{\varrho}(x_1)=t_{\mathfrak{M}_2}^{\varrho}(x_2)\quad \text{ iff } \quad \mathfrak{M}_1,x_1  \simr \mathfrak{M}_2,x_2.
\]
The implication $(\Leftarrow)$ holds for arbitrary \textup{(}not necessarily descriptive\textup{)} models.
\end{lemma}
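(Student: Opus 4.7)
The plan is to handle the two directions separately. For $(\Leftarrow)$, which should work for arbitrary models, I would proceed by induction on a $\varrho$-formula $\varphi$ to show that whenever $\mathfrak{M}_1,x_1 \simr \mathfrak{M}_2,x_2$ we have $\mathfrak{M}_1,x_1 \models \varphi$ iff $\mathfrak{M}_2,x_2 \models \varphi$. The atomic base case is exactly condition (atom); the Boolean cases are routine; and for $\Diamond\psi$, condition (move) lets me transfer a successor witnessing $\psi$ from one model to the other and then apply the induction hypothesis. This yields $t^\varrho_{\mathfrak{M}_1}(x_1)=t^\varrho_{\mathfrak{M}_2}(x_2)$.

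For $(\Rightarrow)$, I would define the candidate bisimulation
\[
\bis \;=\; \bigl\{(x_1,x_2)\in W_1\times W_2 \mid t^\varrho_{\mathfrak{M}_1}(x_1)=t^\varrho_{\mathfrak{M}_2}(x_2)\bigr\},
\]
which by assumption contains $(z_1,z_2)$. Condition (atom) is immediate since the atomic $\varrho$-type is the restriction of the $\varrho$-type to the propositional variables. The real work is (move). Suppose $x_1\bis x_2$ and $x_1R_1y_1$; set $\Gamma=t^\varrho_{\mathfrak{M}_1}(y_1)$. For every finite $\Gamma'\subseteq\Gamma$ we have $\mathfrak{M}_1,y_1\models\bigwedge\Gamma'$, so $\mathfrak{M}_1,x_1\models\Diamond\bigwedge\Gamma'$. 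Since $\Diamond\bigwedge\Gamma'$ is a $\varrho$-formula and $t^\varrho_{\mathfrak{M}_1}(x_1)=t^\varrho_{\mathfrak{M}_2}(x_2)$, this formula is true at $x_2$ in $\mathfrak{M}_2$ as well.

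At this point I invoke Lemma~\ref{l:fundament}(b) in $\mathfrak{M}_2$: from $\mathfrak{M}_2,x_2\models\Diamond\bigwedge\Gamma'$ for every finite $\Gamma'\subseteq\Gamma$ I obtain a successor $y_2$ of $x_2$ with $\mathfrak{M}_2,y_2\models\Gamma$, i.e.\ $t^\varrho_{\mathfrak{M}_1}(y_1)\subseteq t^\varrho_{\mathfrak{M}_2}(y_2)$. The reverse inclusion is automatic because $\varrho$-types are closed under negation: if some $\varrho$-formula $\varphi$ lay in $t^\varrho_{\mathfrak{M}_2}(y_2)\setminus t^\varrho_{\mathfrak{M}_1}(y_1)$, then $\neg\varphi\in t^\varrho_{\mathfrak{M}_1}(y_1)\subseteq t^\varrho_{\mathfrak{M}_2}(y_2)$, contradiction. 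Hence $y_1\bis y_2$, and the symmetric clause of (move) is handled in exactly the same way starting from a given $R_2$-successor of $x_2$.

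The only nontrivial step, and thus the main obstacle one has to get right, is the passage from the finite satisfiability of the family $\{\Diamond\bigwedge\Gamma' : \Gamma'\subseteq\Gamma\ \text{finite}\}$ at $x_2$ to the existence of a genuine $R_2$-successor realising the whole infinite set $\Gamma$; this is precisely the descriptive compactness encoded in Lemma~\ref{l:fundament}(b), so the lemma is essentially its immediate consequence.
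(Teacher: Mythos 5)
Your proof is correct and is exactly the expected argument: the paper itself does not prove Lemma~\ref{bisim-lemma}, citing it as well-known, and the standard proof behind that citation is precisely yours — the $(\Leftarrow)$ direction by induction on $\varrho$-formulas, and the $(\Rightarrow)$ direction by showing that $\varrho$-type equality is itself a $\varrho$-bisimulation, with the back-and-forth condition supplied by the compactness property of descriptive models recorded in Lemma~\ref{l:fundament}$(b)$. Nothing is missing; the negation-closure argument for the reverse type inclusion is the right way to upgrade $\subseteq$ to equality.
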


Variations of the next criterion of interpolant (non-)existence are implicit in various (dis-)proofs of the CIP in modal logics~\cite{DBLP:conf/amast/Marx98,goranko20075}: 

\begin{lemma}\label{criterion}
Let $\sigma = \sig(\varphi) \cup \sig(\psi)$ and $L \in \{\wKF, \DL\}$. Then $\varphi \to \psi$ has no interpolant in $L$ iff there are descriptive $\sigma$-models $\mathfrak M_\varphi$ and $\mathfrak M_\psi$ based on frames for $L$ with roots $\rr_\varphi$ and $\rr_\psi$, respectively, such that
\begin{description}
\item[$(a)$] $\mathfrak{M}_{\varphi},\rr_{\varphi} \models \varphi$ and $\mathfrak{M}_{\psi},\rr_{\psi} \models \neg\psi$\textup{;}


\item[$(b)$] $\mathfrak{M}_{\varphi},\rr_{\varphi} \simr \mathfrak{M}_{\psi},\rr_{\psi}$, where $\varrho = \sig(\varphi) \cap \sig(\psi)$.
\end{description}
As both $\wKF$ and $\DL$ are canonical, the requirement that models $\mathfrak M_\varphi$ and $\mathfrak M_\psi$ be descriptive can be omitted.
\end{lemma}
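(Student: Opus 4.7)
The plan is to prove the easy direction by a direct bisimulation transfer and the hard direction by a Henkin-style construction that brings two maximal $L$-consistent sets into agreement on the shared signature $\varrho$.

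For the $(\Leftarrow)$ direction, assume models $\mathfrak{M}_\varphi,\mathfrak{M}_\psi$ as in $(i),(ii)$ exist and, towards a contradiction, let $\iota$ be an interpolant in $L$. Since $\iota$ is a $\varrho$-formula and the $(\Leftarrow)$ half of Lemma~\ref{bisim-lemma} holds in arbitrary models, the $\varrho$-bisimulation $\mathfrak{M}_\varphi,\rr_\varphi\simr\mathfrak{M}_\psi,\rr_\psi$ yields $\mathfrak{M}_\varphi,\rr_\varphi\models\iota \Leftrightarrow \mathfrak{M}_\psi,\rr_\psi\models\iota$. From $\varphi\to\iota\in L$ and the validity of $L$ in both frames, $(i)$ gives $\mathfrak{M}_\varphi,\rr_\varphi\models\iota$, hence $\mathfrak{M}_\psi,\rr_\psi\models\iota$; combined with $\iota\to\psi\in L$ this contradicts $\mathfrak{M}_\psi,\rr_\psi\models\neg\psi$. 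Note that this half nowhere uses descriptiveness of the witnesses, which is what will justify the concluding remark.

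For the $(\Rightarrow)$ direction, assume no interpolant exists and let $\Sigma=\bigl\{\neg\iota\mid \iota\text{ is a $\varrho$-formula with }(\iota\to\psi)\in L\bigr\}$. First I would show that $\{\varphi\}\cup\Sigma$ is $L$-consistent: if a finite conjunction $\varphi\wedge\neg\iota_1\wedge\cdots\wedge\neg\iota_n$ were inconsistent, then $L\vdash\varphi\to\iota_1\vee\cdots\vee\iota_n$, and since each $\iota_i\to\psi\in L$ also $L\vdash\iota_1\vee\cdots\vee\iota_n\to\psi$, making $\iota_1\vee\cdots\vee\iota_n$ an interpolant. Extend $\{\varphi\}\cup\Sigma$ by Lindenbaum to an $L$-MCS $T_\varphi$, and set $\tau=T_\varphi\cap\{\varrho\text{-formulas}\}$. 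A second consistency check shows $\tau\cup\{\neg\psi\}$ is $L$-consistent: otherwise $L\vdash\iota_1\wedge\cdots\wedge\iota_n\to\psi$ for some $\iota_i\in\tau$, which forces $\neg(\iota_1\wedge\cdots\wedge\iota_n)\in\Sigma\subseteq T_\varphi$ and contradicts the MCS property. Extend $\tau\cup\{\neg\psi\}$ to an $L$-MCS $T_\psi$; by construction $T_\varphi$ and $T_\psi$ contain exactly the same $\varrho$-formulas. Passing to the canonical $L$-model and taking the submodels $\mathfrak{M}_\varphi,\mathfrak{M}_\psi$ generated by $T_\varphi,T_\psi$, canonicity of $L\in\{\wKF,\DL\}$ yields subframes in the appropriate class; the standard truth lemma gives $(i)$ together with $t^\varrho_{\mathfrak{M}_\varphi}(T_\varphi)=t^\varrho_{\mathfrak{M}_\psi}(T_\psi)$, and since $\sigma$ is finite the generated models are descriptive, so the $(\Rightarrow)$ half of Lemma~\ref{bisim-lemma} delivers the $\varrho$-bisimulation required in $(ii)$.

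The main technical obstacle---and the single place where the hypothesised absence of an interpolant is actually used---is the initial consistency of $\{\varphi\}\cup\Sigma$. A naive Lindenbaum extension of $\{\varphi\}$ to an MCS $T_\varphi$, followed by trying to match $T_\varphi\cap\{\varrho\text{-formulas}\}$ against $\neg\psi$, fails in general because the $\varrho$-type of an arbitrary MCS containing $\varphi$ need not be compatible with $\neg\psi$. Enriching $T_\varphi$ in advance with the negations $\Sigma$ of all $\varrho$-formulas that already imply $\psi$ is precisely the device that excludes from $T_\varphi$ those $\varrho$-types which cannot coexist with $\neg\psi$, making the subsequent match, and hence the bisimilar rooted descriptive models, possible.
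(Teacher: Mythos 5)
Your proposal is correct and is essentially the standard argument the paper itself relies on without spelling out (it only cites Marx and Goranko--Otto, where variations of this criterion are implicit): the easy direction is bisimulation-invariance of $\varrho$-formulas, and the hard direction is the joint-consistency trick of adjoining $\Sigma=\{\neg\iota\mid\iota$ a $\varrho$-formula, $(\iota\to\psi)\in L\}$ to $\varphi$, matching $\varrho$-types of two maximal consistent sets, and reading the witnesses off the canonical model, with canonicity guaranteeing the frames are frames for $L$. The only step stated too loosely is the claim that the rooted generated submodels are descriptive ``since $\sigma$ is finite'': you must build the canonical model over the $\sigma$-fragment (otherwise \textbf{(dif)} fails), and compactness \textbf{(com)} of a point-generated submodel is not automatic for descriptive models in general---here it holds because weak transitivity makes the generated submodel $\{r\}\cup R(r)$, so a finitely satisfiable $\Gamma$ not realised at $r$ can be extended together with $\{\chi\mid\Box\chi\in t(r)\}$ to a maximal consistent successor of $r$. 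With that justification supplied, the proof goes through as in the literature the paper cites.
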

 
We first apply this criterion to decide the IEP for the difference logic $\DL$.
A key observation is that, from any two 
$\sigma$-models $\mathfrak M_\varphi$ and $\mathfrak M_\psi$ witnessing the nonexistence of an interpolant for a given $\varphi \to \psi$ in $\DL$ in the sense of Lemma~\ref{criterion}, we can extract sub-models of polynomial size in $|\varphi|$ and $|\psi|$ that also satisfy the above criterion. We call this phenomenon the \emph{polysize bisimilar model property} of $\DL$, which clearly implies that the IEP for $\DL$ is decidable in \textsc{coNP}. Indeed, to check that $\varphi \to \psi$ has no interpolant in $\DL$, we can guess polynomial-size $\mathfrak M_\varphi$ and $\mathfrak M_\psi$ together with a relation $\bis$ between them and then check whether they satisfy the criterion of Lemma~\ref{criterion}.

We remind the reader that rooted frames for $\DL$ (and so the frames $\mathfrak M_\varphi$ and $\mathfrak M_\psi$ are based on) are clusters, containing possibly both reflexive and irreflexive points. 
To show the polysize bisimilar model property of $\DL$, we proceed in two steps. First, for every $\alpha \in \sub(\varphi)$ ($\alpha \in \sub(\psi)$) satisfiable in $\mathfrak M_\varphi$ (respectively, $\mathfrak M_\psi$), we pick two points $x_\alpha,x_\alpha'$ satisfying $\alpha$ in $\mathfrak M_\varphi$ (in $\mathfrak M_\psi$) if they exist, otherwise a single such point $x_\alpha$. 
Denote the set of the points selected this way by $M_\varphi$ ($M_\psi$), assuming that $\rr_\varphi \in M_\varphi$ and $\rr_\psi \in M_\psi$. Let
%
\[
T = \{ t^\varrho_{\mathfrak M_\varphi}(x) \mid x \in M_\varphi \} \cup \{ t^\varrho_{\mathfrak M_\psi}(x) \mid x \in M_\psi \}.
\]
As $\mathfrak{M}_{\varphi},\rr_{\varphi} \simr \mathfrak{M}_{\psi},\rr_{\psi}$,
every $\varrho$-type $\type \in T$ is satisfied in both $\mathfrak M_\varphi$ and $\mathfrak M_\psi$. Now, for each $\type \in T$, 
we pick two distinct points satisfying $\type$ in $\mathfrak M_\varphi$, if they exist, and otherwise a single such point, and add them to $M_\varphi$ if they were not already there. We do the same for $\mathfrak M_\psi$ and $M_\psi$. Let $\mathfrak M^\dag_\varphi$ and $\mathfrak M^\dag_\psi$ be the restrictions of $\mathfrak M_\varphi$ and $\mathfrak M_\psi$ to the resulting $M_\varphi$ and $M_\psi$, and let 
%
\[
\bis^\dag=\{(x,x')\in M_\varphi \times M_\psi\mid t^\varrho_{\mathfrak M_\varphi}(x) =t^\varrho_{\mathfrak M_\psi}(x') \}.
\]

\begin{lemma}\label{l:dl}\
$(a)$ $\mathfrak{M}^\dag_{\varphi},\rr_{\varphi} \models \varphi$, $\mathfrak{M}^\dag_{\psi},\rr_{\psi} \models \neg\psi$,  and 
$(b)$ 
$\bis^\dag$ is a $\varrho$-bisimulation between $\mathfrak M^\dag_\varphi$ and $\mathfrak M^\dag_\psi$ with $\rr_{\varphi}\bis^\dag \rr_\psi$. 
\end{lemma}
%
\begin{proof}
Suppose that $\mathfrak M_\varphi$ and $\mathfrak M_\psi$ are based on the respective clusters 
$(W_\varphi,R_\varphi)$ and $(W_\psi,R_\psi)$. 

$(a)$ $\mathfrak{M}^\dag_{\varphi},\rr_{\varphi} \models \varphi$ follows from the fact that, for any $\chi\in\sub(\varphi)$ and $x\in M_\varphi$,
$\mathfrak M_\varphi,x\models\chi$ iff $\mathfrak M_\varphi^\dag,x\models\chi$, which can be established by a straightforward induction on the construction of $\varphi$.
We only show $(\Rightarrow)$ for $\chi=\Diamond\alpha$. 
Suppose that $\Diamond\alpha\in\sub(\varphi)$, $x\in M_\varphi$, and $\mathfrak M_\varphi,x\models\Diamond\alpha$. If there exist two points in $\mathfrak M_\varphi$ satisfying $\alpha$, then either $x_\alpha$ or $x_\alpha'$ is distinct from $x$, and so 
$\mathfrak M^\dag_\varphi,x\models\Diamond\alpha$. Otherwise, $xR_\varphi x_\alpha$ must hold, and so we also have $\mathfrak M^\dag_\varphi,x\models\Diamond\alpha$. 
(A similar argument shows that $\mathfrak{M}^\dag_{\psi},\rr_{\psi} \models \neg\psi$.)

$(b)$
Suppose $x,y\in M_\varphi$, $x'\in M_\psi$, $x\bis^\dag x'$, and $xR_\varphi y$.
Then $t^\varrho_{\mathfrak M_\varphi}(x)=t^\varrho_{\mathfrak M_\psi}(x')$ and $t^\varrho_{\mathfrak M_\varphi}(y)\in T$.
There are two cases: 
$(i)$ If $t^\varrho_{\mathfrak M_\varphi}(x)\ne t^\varrho_{\mathfrak M_\varphi}(y)$, then there is $y'\in M_\psi$ such that 
$t^\varrho_{\mathfrak M_\varphi}(y)=t^\varrho_{\mathfrak M_\varphi}(y')$ and $y'\ne x'$.
$(ii)$ If $t^\varrho_{\mathfrak M_\varphi}(x)=t^\varrho_{\mathfrak M_\varphi}(y)=\type$, then $\Diamond\type\subseteq\type$,
and so either $(ii.1)$ there are two distinct points in $M_\psi$ satisfying $\type$ in $\mathfrak M_\psi$ or $(ii.2)$ there is a single reflexive point in $M_\psi$ satisfying $\type$ in $\mathfrak M_\psi$.
In case $(ii.1)$ one of these two points must be different from $x'$, and in case $(ii.2)$ this single reflexive point must be $x'$. Thus, in 
all cases, 
we have a point $y'\in M_\psi$ with $x'R_\psi y'$ and 
$t^\varrho_{\mathfrak M_\psi}(y)=t^\varrho_{\mathfrak M_\psi}(y')$, 
and so $y\bis^\dag y'$. 
\end{proof}

By the construction, $|M_\varphi|$ and $|M_\psi|$ are polynomial in $|\varphi|$ and $|\psi|$. Thus, we obtain:

\begin{theorem}
$(a)$ $\DL$ enjoys the polysize bisimilar model property.

$(b)$ The interpolant existence property for $\DL$ is \emph{\textsc{coNP}}-complete. 
\end{theorem}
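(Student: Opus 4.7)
The theorem essentially follows from the construction sketched in the discussion just above the statement; the plan is to make the verification precise and add a lower bound for (ii).

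For (i), assume $\varphi \to \psi$ has no interpolant in $\DL$ and, by Lemma~\ref{criterion} (dropping descriptiveness by canonicity), pick any $\DL$-models $\mathfrak{M}_\varphi, \mathfrak{M}_\psi$ with roots $\rr_\varphi, \rr_\psi$ and a $\varrho$-bisimulation $\bis$ satisfying its two clauses. Run the two-step selection. The polynomial size bound is immediate: $|M_\varphi|, |M_\psi| \le 2(|\sub(\varphi)| + |\sub(\psi)|) + 2|T|$ with $|T| \le 2(|\sub(\varphi)| + |\sub(\psi)|)$. I would then verify by induction on $\alpha \in \sub(\varphi)$ that $\mathfrak{M}^\dag_\varphi, x \models \alpha$ iff $\mathfrak{M}_\varphi, x \models \alpha$ for every $x \in M_\varphi$. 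The only nontrivial case is $\alpha = \Diamond\beta$; a short case analysis on whether $x$ is reflexive or irreflexive reduces it to the observation that, among the up to two witnesses of $\beta$ kept in $M_\varphi$, we can always find one that $x$ is $R$-related to in the restriction (when only one witness was kept, the assumption $\mathfrak{M}_\varphi, x \models \Diamond\beta$ forces that witness to work). Symmetrically, $\mathfrak{M}^\dag_\psi, \rr_\psi \models \neg\psi$.

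For the bisimulation, define $\bis^\dag = \{(x_1,x_2) \in M_\varphi \times M_\psi \mid t^\varrho_{\mathfrak{M}_\varphi}(x_1) = t^\varrho_{\mathfrak{M}_\psi}(x_2)\}$; by Lemma~\ref{bisim-lemma} this coincides with the restriction of $\bis$ to $M_\varphi \times M_\psi$ and contains $(\rr_\varphi, \rr_\psi)$. \textbf{(atom)} is immediate. For \textbf{(move)}, given $x_1 \bis^\dag x_2$ and $x_1 R y_1$ in $\mathfrak{M}^\dag_\varphi$, set $\type = t^\varrho_{\mathfrak{M}_\varphi}(y_1) \in T$ and let $y_2$ be the partner of $y_1$ supplied by the original $\bis$-move from $x_2$; by Lemma~\ref{bisim-lemma} $y_2$ has $\varrho$-type $\type$ and $x_2 R y_2$ in $\mathfrak{M}_\psi$. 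If two distinct witnesses of $\type$ were added to $M_\psi$, at least one differs from $x_2$ and thus supplies the needed successor in $\mathfrak{M}^\dag_\psi$; if $\type$ has a unique witness $z$ in $\mathfrak{M}_\psi$, then $y_2 = z$ by uniqueness, so $z$ witnesses the move; the reflexive corner (taking $y_1 = x_1$) is handled by $y_2 = x_2$ when $x_2$ is reflexive. The symmetric direction is analogous.

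For (ii), the upper bound is then a standard \textsc{coNP} guess-and-check: to certify the absence of an interpolant, guess polysize $\mathfrak{M}^\dag_\varphi, \mathfrak{M}^\dag_\psi, \bis^\dag$ and verify the clauses of Lemma~\ref{criterion} in polynomial time. For the matching lower bound, reduce $\DL$-validity (shown \textsc{coNP}-complete in~\cite{DBLP:journals/jsyml/Rijke92}) to IEP: given any $\chi$, the implication $\top \to \chi$ has an interpolant in $\DL$ iff $\chi \in \DL$, since the common signature is empty so any interpolant is $\DL$-equivalent to $\top$ or $\bot$, and only $\iota = \top$ can satisfy $\top \to \iota \in \DL$, forcing $\chi \in \DL$. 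The main delicacy throughout is the single-realiser case of \textbf{(move)}, which I settle by transferring uniqueness from $\mathfrak{M}_\psi$ via the original $\bis$ to force the unique retained witness of $\type$ to coincide with the bisimilar successor; the remaining steps are routine inductions.
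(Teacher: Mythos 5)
Your overall route is the paper's: the same two-step selection of at most two witnesses per subformula and per shared $\varrho$-type, restriction of the cluster models, and a \textsc{coNP} guess-and-check (you add an explicit hardness reduction, which the paper leaves implicit). But the step you set out to ``make precise''---the bisimulation check---has a genuine flaw as written. You define $\bis^\dag$ as $\varrho$-type equality on $M_\varphi\times M_\psi$ and assert that ``by Lemma~\ref{bisim-lemma} this coincides with the restriction of $\bis$''. That is not what the lemma gives: it concerns descriptive models (which you discarded at the outset ``by canonicity''), and even for descriptive models equality of $\varrho$-types only yields the existence of \emph{some} bisimulation relating the two points, not membership in the particular relation $\bis$ you fixed. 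The error then propagates: in \textbf{(move)} you invoke ``the original $\bis$-move from $x_2$'', but under your definition $x_1\bis^\dag x_2$ only says $t^\varrho_{\mathfrak M_\varphi}(x_1)=t^\varrho_{\mathfrak M_\psi}(x_2)$ and does not give $x_1\bis x_2$, so that move is not available; conversely, if $\bis^\dag$ really is the restriction of $\bis$, the move from $x_2$ is fine, but then the retained witness of $\type$ in $M_\psi$ need not be $\bis$-related to $y_1$, and the same gap reappears at the other end.

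The repair is not deep, but it is exactly the point the paper hides behind ``it is readily seen''. Either keep the models descriptive (Lemma~\ref{criterion} supplies descriptive ones, so there is no need to drop this), in which case Lemma~\ref{bisim-lemma} shows that $\varrho$-type equality between the two ambient models is itself a (indeed the largest) $\varrho$-bisimulation, and your two-witness/unique-witness case analysis can be run against that relation; or argue directly that type equality is a bisimulation between two rooted $\DL$-models (clusters) whose roots are linked by some $\bis$: every point is the root or a root-successor, hence lies in the domain or range of $\bis$, and if some type had two distinct (or a reflexive) realiser on one side but only a single irreflexive realiser on the other, condition \textbf{(move)} for $\bis$ would force that realiser to be reflexive, a contradiction. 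With either fix, your case analysis for \textbf{(move)} (two retained witnesses, or uniqueness forcing the witness) is exactly right, as is the truth-preservation induction. One smaller slip: in the hardness reduction, variable-free formulas are not all $\DL$-equivalent to $\top$ or $\bot$ (e.g.\ $\Diamond\top$ is neither); the reduction nevertheless works because $(\top\to\iota)\in\DL$ forces $\iota\in\DL$, and then $(\iota\to\chi)\in\DL$ gives $\chi\in\DL$ by modus ponens.
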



\section{Deciding Interpolant Existence for $\wKF$}\label{sec:deciding}

In this section we 
show that $\wKF$ has the 3-exponential-size bisimilar model property, which means that the IEP is decidable in \textsc{coN3ExpTime}. 

Given formulas $\varphi$ and $\psi$, let $\sub(\varphi,\psi) = \sub(\varphi) \cup \sub(\psi)$, $\sigma = \sig(\varphi) \cup \sig(\psi)$ and $\varrho = \sig(\varphi) \cap \sig(\psi)$. If $\varphi \to \psi$ does not have an interpolant in $\wKF$, then
Lemma~\ref{criterion} provides two pointed descriptive $\sigma$-models $\mathfrak M_\varphi,\rr_\varphi$ and $\mathfrak M_\psi,\rr_\psi$  based on weakly transitive frames.
To simplify notation, we will operate with a single descriptive $\sigma$-model $\mathfrak M$ based on a weakly transitive frame $\mathfrak{F} = (W,R)$---the disjoint union of $\mathfrak M_\varphi$ and $\mathfrak M_\psi$---containing two points $\rr_\varphi$, $\rr_\psi$ such that $\mathfrak{M},\rr_{\varphi} \models \varphi$, $\mathfrak{M},\rr_{\psi} \models \neg\psi$ and $\mathfrak M,\rr_\varphi \simr \mathfrak M,\rr_\psi$. 
Our aim is to convert $\mathfrak M$ into a model $\mathfrak M^\dag$ based on a weakly transitive frame $\mathfrak F^\dag = (W^\dag,R^\dag)$ that still witnesses the lack of an interpolant for $\varphi \to \psi$ in the above sense,
and has $W^\dag$ of triple-exponential size in $|\sub(\varphi,\psi)|$.

Given a point $x$ in $\mathfrak M$, we define the $\varphi,\psi$-\emph{type} $\ftype^{\varphi,\psi}_{\mathfrak{M}}(x) = t^\sigma_{\mathfrak M}(x) \cap \sub(\varphi, \psi)$.
For a set $X$ of points in $\mathfrak M$, we let $\ftype_{\mathfrak{M}}^{\varphi,\psi}(X)=\bigl\{\ftype_{\mathfrak{M}}^{\varphi,\psi}(x)\mid x\in X\bigr\}$. 
Our construction 
is an elaborate $\sub(\varphi,\psi)$-filtration.
(For the well-known \emph{filtration} techniques in modal logic see, e.g., \cite{DBLP:books/daglib/0030819}.) 
As usual, it keeps track of the $\varphi,\psi$-types of points in $\mathfrak M$
to satisfy condition $(a)$ of Lemma~\ref{criterion}. In addition, some connections of these $\varphi,\psi$-types with the
$\varrho$-types of points in $\mathfrak M$ are also noted in order to satisfy condition $(b)$ of Lemma~\ref{criterion}.
When applied to satisfiability checking, our construction reduces to the filtration of \cite{kudinov2024filtrations}; see Remark~\ref{r:filtr}.
%

To begin with, we
define an equivalence relation $\equC$ between clusters in $\mathfrak{M}$ by taking $C \equC C'$ 
iff there exists a sequence $C = C_{0},\dots, C_{n} = C'$ of clusters in $\mathfrak{M}$ such that, for each $i < n$ there are $x_{i}\in C_{i}$ and $y_{i+1}\in C_{i+1}$ with $t_{\mathfrak M}^\varrho(x_i) = t_{\mathfrak M}^\varrho(y_{i+1})$.
Let $\cC{C} = \{ C' \mid C'\equC C\}$ and $\TC{C} = \bigcup_{C' \in\cC{C}} t^\varrho_{\mathfrak M}(C')$. 
It follows from the definition that $\cC{C} = \{C' \mid t_{\mathfrak{M}}^{\varrho}(C') \cap \TC{C} \ne \emptyset \}$.
By 
Lemma~\ref{l:fundament}~$(d)$, 
for every $C'\in \cC{C}$,  there is a $\varrho$-maximal $D\in\cC{C}$ with $C'R^{r}D$ and 
$t^\varrho_{\mathfrak M}(C')\subseteq t^\varrho_{\mathfrak M}(D)$.

\begin{lemma}\label{rho-clusters}
If $D \in \cC{C}$ is a $\varrho$-maximal cluster, then 
$\TC{C} = t_{\mathfrak{M}}^{\varrho}(D)$, and so $|\TC{C}| \le 2^{|\varrho|} $.
%
%
\end{lemma}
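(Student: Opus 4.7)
The claim has two parts: the equality $\TC{C}=t^\varrho_{\mathfrak M}(D)$ and the bound $|t^\varrho_{\mathfrak M}(D)|\le 2^{|\varrho|}$. For the equality, the $\supseteq$-direction is immediate from the definitions because $D\in\cC{C}$. For $\subseteq$, given any $C'\in\cC{C}$, the plan is to apply Lemma~\ref{l:fundament}(c) to lift $C'$ to a $\varrho$-maximal $D'$ with $C'R^{r}D'$ and $t^\varrho_{\mathfrak M}(C')\subseteq t^\varrho_{\mathfrak M}(D')$; since $C'$ and $D'$ share a $\varrho$-type, $D'\in\cC{C}$. The task then reduces to showing that any two $\varrho$-maximal clusters in $\cC{C}$ realise the same set of $\varrho$-types. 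This reduction is propagated along a defining chain $D=C_0,\dots,C_n=D'$ for $D\equC D'$: I would lift each $C_i$ to a $\varrho$-maximal $D_i$ via Lemma~\ref{l:fundament}(c), so that the shared $\varrho$-type between consecutive $C_i,C_{i+1}$ at once gives a shared $\varrho$-type between $D_i$ and $D_{i+1}$.

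The central sub-claim is: \emph{two $\varrho$-maximal clusters $D_1,D_2$ sharing a $\varrho$-type, witnessed by $z_1\in D_1$, $z_2\in D_2$ with $t^\varrho_{\mathfrak M}(z_1)=t^\varrho_{\mathfrak M}(z_2)$, satisfy $t^\varrho_{\mathfrak M}(D_1)=t^\varrho_{\mathfrak M}(D_2)$}. By Lemma~\ref{bisim-lemma}, $\mathfrak M,z_1\simr\mathfrak M,z_2$ via some $\bis$. Given $u\in D_1$, applying the \textbf{(move)} clause to the cluster edge $z_1Ru$ produces $u'$ with $z_2Ru'$ and $u\bis u'$, so $t^\varrho_{\mathfrak M}(u)=t^\varrho_{\mathfrak M}(u')$; the crux is to trap $u'$ in $D_2$. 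I would pull the reverse edge $uRz_1$ back through $\bis$ to obtain $z_1'$ with $u'Rz_1'$ and $t^\varrho_{\mathfrak M}(z_1')=t^\varrho_{\mathfrak M}(z_1)=t^\varrho_{\mathfrak M}(z_2)$, and then exploit weak transitivity of $z_2Ru'Rz_1'$, which yields $z_2=z_1'$ or $z_2Rz_1'$. In the first case, $u'$ is mutually $R$-related to $z_2$, hence lies in $C(z_2)=D_2$. In the second, the $\varrho$-maximality of $D_2$ forces $z_1'\in D_2$, so $u'$ is an $R$-successor of $z_2\in D_2$ and an $R$-predecessor of $z_1'\in D_2$; by the cluster definition, $u'\in D_2$ in this case too.

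For the cardinality bound, the plan is to establish that within any cluster the atomic $\varrho$-type already determines the full $\varrho$-type. I would exhibit a $\varrho$-bisimulation on $W$ consisting of the identity together with all pairs $(u,v)\in D\times D$ with $\at^\varrho_{\mathfrak M}(u)=\at^\varrho_{\mathfrak M}(v)$; verifying \textbf{(atom)} is by construction, and \textbf{(move)} follows from the facts that any two distinct points of a non-degenerate cluster see each other and that weak transitivity equalises the external successors of all cluster members (for a cluster point $x$ and external $z$ with $xRz$, any other cluster point $y$ satisfies $yRx$, so by weak transitivity $yRz$). Lemma~\ref{bisim-lemma} then turns $\bis$-relatedness into equality of $\varrho$-types, bounding $|t^\varrho_{\mathfrak M}(D)|$ by the number of atomic $\varrho$-types, which is at most $2^{|\varrho|}$. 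I expect the main obstacle to be the sub-claim in the equality: orchestrating the move property, weak transitivity, and the $\varrho$-maximality of $D_2$ to close the accessibility loop that puts $u'$ into $D_2$.
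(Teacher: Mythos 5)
Your proof is correct, but it takes a genuinely different route from the paper's. The paper argues pointwise: for a point $x$ in a cluster of $\cC{C}$, the defining chain directly yields some $z\in D$ with mutual containments $\Diamond^n t^\varrho_{\mathfrak M}(z)\subseteq t^\varrho_{\mathfrak M}(x)$ and $\Diamond^n t^\varrho_{\mathfrak M}(x)\subseteq t^\varrho_{\mathfrak M}(z)$; it then invokes the saturation of descriptive models (Lemma~\ref{l:fundament}~$(b)$) to realise $t^\varrho_{\mathfrak M}(x)$ and then $t^\varrho_{\mathfrak M}(z)$ along a path $zR^nuR^nv$, and finally uses weak transitivity plus the $\varrho$-maximality of $D$ to pull $v$, and hence $u$, into $D$. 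You instead factor the argument through Lemma~\ref{l:fundament}~$(c)$ and a pairwise sub-claim---two $\varrho$-maximal clusters sharing a $\varrho$-type realise the same set of $\varrho$-types---proved by turning type equality into a bisimulation via Lemma~\ref{bisim-lemma} and trapping the witness $u'$ in $D_2$ by weak transitivity and $\varrho$-maximality; saturation enters only through the $(\Rightarrow)$ direction of Lemma~\ref{bisim-lemma} rather than through Lemma~\ref{l:fundament}~$(b)$. The paper's computation is shorter, disposing of the whole chain in one $\Diamond^n$ step; your decomposition is more modular, makes explicit the useful invariant that all $\varrho$-maximal clusters of an $\equC$-class carry the same type set, and it also spells out the cardinality bound (atomic $\varrho$-type determines the $\varrho$-type inside a cluster, via an explicit bisimulation), which the paper leaves implicit as the $\varrho$-analogue of Lemma~\ref{l:fundament}~$(a)$. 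Two small points would need half a sentence each in a full write-up: in your second case you still need one more application of weak transitivity ($u'Rz_1'$ and $z_1'Rz_2$ give $u'=z_2$ or $u'Rz_2$) before the cluster definition places $u'$ in $D_2$, and the degenerate choice $u=z_1$ should be noted as trivial; neither affects correctness.
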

\begin{proof}
Let $D \in \cC{C}$ be a $\varrho$-maximal cluster. Clearly, it is enough to prove that $\TC{C} \subseteq t_{\mathfrak{M}}^{\varrho}(D)$.
Let $x \in C'$, for some $C'\in\cC{C}$. 
We need to show that there exists $y \in D$ with  $t_{\mathfrak{M}}^{\varrho}(y) = t_{\mathfrak{M}}^{\varrho}(x)$. 
As $D\equC C'$, by the definition of $\equC$, we have $z \in D$ 
such that $\Diamond^n t_{\mathfrak{M}}^{\varrho}(z)\subseteq t_{\mathfrak{M}}^{\varrho}(x)$ and
$\Diamond^n t_{\mathfrak{M}}^{\varrho}(x)\subseteq t_{\mathfrak{M}}^{\varrho}(z)$,
for some $n \ge 0$. If $n = 0$, then we take $y = z$. If $n>0$, then by 
repeated applications of Lemma~\ref{l:fundament}~$(c)$, we obtain 
$u,v \in W$ with $z R^n u R^n v$, 
$t_{\mathfrak{M}}^{\varrho}(u) = t_{\mathfrak{M}}^{\varrho}(x)$ and $t_{\mathfrak{M}}^{\varrho}(v) = t_{\mathfrak{M}}^{\varrho}(z)$. 
By weak transitivity, $z = v$ or $z R v$. In either case, by the $\varrho$-maximality of $D$, we must have $v \in D$, and so $u \in D$, yielding $y = u$.
%
\end{proof}


%

For every equivalence class $\cC{C}$ in $\mathfrak M$, we let $\ATC{C}=\{\type\cap\varrho\mid\type\in\TC{C}\}$.
Given a cluster $C$, we define the \emph{cluster-type} of $C$ in $\mathfrak{M}$ as the function
$\Type_C\colon\ATC{C}\to 2^{\ftype_{\mathfrak{M}}^{\varphi,\psi}(C)}$ where, for any $\atype\in\ATC{C}$, 
%
\[
\Type_C(\atype)=\{ \ftype_{\mathfrak{M}}^{\varphi,\psi}(x) \mid x\in C, \ \ftype_{\mathfrak{M}}^{\varphi,\psi}(x)\cap\varrho=\at^{\varrho}_{\mathfrak{M}}(x)=\atype \}.
\]
Observe that 
$\ATC{C'}=\ATC{C}$ for every $C'\in\cC{C}$, and so $\Type_C$ and $\Type_{C'}$ have the same domain $\dom\,\Type_C=\dom\,\Type_{C'}=\ATC{C}$.
As $\bigcup_{\atype\in\dom\,\Type_C}\Type_C(a)=\ftype^{\varphi,\psi}_{\mathfrak M}(C)$,
$\Type_C$ keeps a record of both $\varphi,\psi$-types and atomic $\varrho$-types of points in $C$ in the context of the whole equivalence class $\cC{C}$. 
Also, $\Type_C(\atype)$ might be empty for some $\atype\in\dom\,\Type_C$, but if $C$ is $\varrho$-maximal then 
$\Type_C(\atype)\ne\emptyset$ for all $\atype\in\dom\,\Type_C$.
Note that the number of pairwise distinct cluster-types in $\mathfrak M$ does not exceed 
$2^{|\varrho|}\cdot \bigl(2^{2^{|\sub(\varphi,\psi)|}}\bigr)^{2^{|\varrho|}}=2^{|\varrho|+2^{|\sub(\varphi,\psi)|+|\varrho|}}$.

By the \emph{mosaic of $\cC{C}$ in} $\mathfrak M$ we mean the set 
$\MM{C} = \bigl\{ \Type_{C'} \mid C' \in \cC{C}\bigr\}$ of cluster-types of the same 
domain (also called as the \emph{domain of} $\MM{C}$ and denoted by $\dom\,\MM{C}$).
$M$ is a \emph{mosaic in} $\mathfrak M$ if $M=\MM{C}$ for some $C$.
Clearly,  the number of pairwise distinct mosaics in $\mathfrak M$ is $\mathcal{O}\bigl(2^{2^{2^{|\sub(\varphi,\psi)|}}}\bigr)$.

We are now in a position to define the model $\mathfrak M^\dag = (\mathfrak F^\dag, \mathfrak v^\dag)$ and its underlying frame $\mathfrak F^\dag = (W^\dag,R^\dag)$. Suppose $x \in W$.
Then we set
\[
\p(x) = \bigl(\ftype_{\mathfrak{M}}^{\varphi,\psi}(x) ,\at^\varrho_{\mathfrak M}(x),\Type_{C(x)},\MM{C(x)}\bigr) \quad \text{and} \quad W^\dag = \{ \p(x) \mid x \in W\}.
\]
Observe that if $\p\in W^\dag$ and $\p = (\ftype,\atype,\Type, M)$, then
$\atype=\ftype\cap\varrho$, $\Type\in M$ and $\ftype\in\Type(\atype)$ always hold.
Moreover,
\begin{equation}\label{abstract}
\mbox{$(\ftype,\atype,\Type, M)\in W^\dag$, for all mosaics $M$, $\Type\in M$, $\atype\in\dom\, M$, and $\ftype\in\Type(\atype)$.}
\end{equation}
We call $M$ and $\atype$ the \emph{mosaic} and the $\varrho$-\emph{index} of $\p$, respectively.
 Later on, we shall see that $\p$ and $\p'$ are $\varrho$-bisimilar if they share the same mosaic and $\varrho$-index. 
%
%
%
By the above calculations,  $|W^\dag| =\mathcal{O}\bigl(2^{2^{2^{|\sub(\varphi,\psi)|}}}\bigr)$.
The valuation $\mathfrak v^\dag$ on $W^\dag$ is inherited from $\mathfrak v$ in $\mathfrak M$: for every $p \in \sigma$,
\[
\mathfrak v^\dag(p) = \{ \p(x) \mid x \in \mathfrak v(p) \} = \{ \p(x) \mid p \in t_{\mathfrak{M}}^{\varphi,\psi}(x), \ x \in W \}.
\]
%
%
To define the accessibility relation $R^\dag$ in $\mathfrak F^\dag$, we require some new notions. 
Let $\ftype$ and $\ftype'$ be $\varphi,\psi$-types, 
$\Type_C\colon\ATC{C}\to 2^{\ftype_{\mathfrak{M}}^{\varphi,\psi}(C)}$ and 
$\Type_{C'}\colon\ATC{C'}\to 2^{\ftype_{\mathfrak{M}}^{\varphi,\psi}(C')}$
cluster-types, $M$ and $M'$ mosaics. Define a relation $\twoheadrightarrow$ between such pairs by taking:
\begin{itemize}
\item[--] $\ftype \twoheadrightarrow  \ftype'$ iff, for every $\Diamond \psi \in \sub(\varphi, \psi)$, whenever $\chi$ or $\Diamond\chi$ is in $\ftype'$, then $\Diamond\chi\in \ftype$;

\item[--] $\Type_C \twoheadrightarrow \Type_{C'}$ iff $\ftype\twoheadrightarrow\ftype'$ for all $\ftype\in \ftype^{\varphi,\psi}_{\mathfrak M}(C)$ and
$\ftype'\in \ftype^{\varphi,\psi}_{\mathfrak M}(C')$;
%

\item[--] $M \twoheadrightarrow M'$ iff for every $\Type\in M$ there is $\Type'\in M'$ with $\Type \twoheadrightarrow \Type'$.
\end{itemize}
It is readily checked that the defined relation $\twoheadrightarrow$ has the following properties:
\begin{align}
\label{trans}
& \mbox{$\twoheadrightarrow$ is transitive in all three settings;}\\
\label{rto}
& \mbox{if $xRy$, then $\ftype_{\mathfrak{M}}^{\varphi,\psi}(x) \twoheadrightarrow \ftype_{\mathfrak{M}}^{\varphi,\psi}(y)$;}\\
\label{crc}
& \mbox{if $CRC'$ and $C\ne C'$, then $\Type_{C} \twoheadrightarrow \Type_{C'}$.}
\end{align}
Note that $CR C$ does not necessarily imply $\Type_{C} \twoheadrightarrow \Type_{C}$; see Example~\ref{ex:basic3}.
By \eqref{crc}, Lemmas \ref{l:fundament}~$(d)$ and \ref{rho-clusters}, we also have that
\begin{equation}\label{tomax}
\mbox{for all $C$ there is a $\varrho$-maximal $D\in\cC{C}$ such that $C=D$ or $\Type_C\twoheadrightarrow \Type_{D}$.}
\end{equation}

Next, for any mosaic $M$, we define a subset $\iset{M}$ of $\dom\, M$ by taking
\begin{align*}
\iset{M}=\{\atype\in\dom\, M\mid \Diamond\type\not\subseteq\type,\  \mbox{for all clusters $C$} & \mbox{ with $M=\MM{C}$ and}\\
& \mbox{ all $\type\in\TC{C}$ with $\type\cap\varrho=\atype$} \}.
\end{align*}
As $\atype\in\iset{M}$ implies that for every cluster $C$ with $M=\MM{C}$ there is at most one $x\in C$ such that $\ftype^{\varphi,\psi}_{\mathfrak M}(x)\in\Type_C(\atype)$ and such an $x$ is irreflexive, it follows that
\begin{equation}\label{singleton}
\mbox{if $\Type_C\in M$, then $|\Type_C(\atype)|\leq 1$ for every $\atype\in\iset{M}$.}
\end{equation}
We also claim that
\begin{equation}\label{irrall}
\mbox{if $C$ is not $\varrho$-maximal and $\Type_C\in M$, then $\Type_C(\atype)=\emptyset$ for every $\atype\in\iset{M}$.}
\end{equation}
Indeed, by Lemmas~\ref{l:fundament}~$(d)$ and \ref{rho-clusters}, there is a $\varrho$-maximal $D\in [C(x)]$ with $C\ne D$ and $CRD$.
Let $\atype\in\dom\, \MM{C}$ be such that $\Type_C(\atype)\ne\emptyset$. There there is $x\in C$ with $t^\varrho_{\mathfrak M}(x)\cap\varrho=\atype$. By Lemma~\ref{rho-clusters}, there is $y\in D$ with $t^\varrho_{\mathfrak M}(y)=t^\varrho_{\mathfrak M}(x)$. Thus,
$\Diamond t^\varrho_{\mathfrak M}(x)\subseteq t^\varrho_{\mathfrak M}(x)$, and so $\atype\notin\iset{\MM{C}}$.


%
%
%

%

%

Now, to define $R^\dag$ on $W^\dag$, suppose $\p,\p' \in W^\dag$, where 
\[
\p = (\ftype,\atype,\Type, M), \qquad \p' = (\ftype',\atype',\Type', M').
\]
The definition of $R^\dag$ for $\p$, $\p'$ depends on whether $M=M'$ and $M \twoheadrightarrow M$:
%
%
%
\begin{description}
\item[\rm\textit{Case} $M \ne M'$:] then $\p R^\dag \p'$ iff $M \twoheadrightarrow M'$ and $\Type \twoheadrightarrow \Type'$.

\item[\rm\textit{Case} $M = M'$, $M \twoheadrightarrow M$:] then $\p R^\dag \p'$ iff 
\begin{itemize}
\item[--] either $\p \ne \p'$ and ($\Type = \Type'$ or $\Type \twoheadrightarrow \Type'$), 

\item[--] or $\p = \p'$ and $\ftype \twoheadrightarrow \ftype$.
\end{itemize}

\item[\rm\textit{Case} $M = M'$, $M \not\twoheadrightarrow M$:] 
%
%
then $\p R^\dag \p'$ iff 
\begin{itemize}
\item[--] either $\p \ne \p'$ and $\bigl(\Type = \Type'$ or $(\Type \twoheadrightarrow \Type'$ and $\Type(\btype)=\emptyset$ for all $\btype\in\iset{M})\bigr)$,
	
\item[--] or $\p = \p'$, $\ftype \twoheadrightarrow \ftype$, and $\atype\not\in\iset{M}$.
\end{itemize}
\end{description}		
%
%

\begin{example}\label{ex:basic3}
Consider $\varphi$, $\psi$, $\mathfrak M_\varphi$ and $\mathfrak M_\psi$ with $\sigma = \{p,q\}$ and $\varrho = \emptyset$ from Example~\ref{ex:basic} $(i)$. 
Then $\sub(\varphi,\psi)$ consists of the formulas $p$, $q$, $\Diamond p$, $\Diamond\neg q$, $\Diamond\Diamond p$, $\Diamond\Diamond\neg q$, $\varphi$, $\psi$ and their negations.
Let $\mathfrak M$ be the disjoint union of $\mathfrak M_\varphi$ and $\mathfrak M_\psi$. Then $C(\rr_\varphi) = \{\rr_\varphi,x_1\} = C(x_1)$, $C(\rr_\psi) = \{\rr_\psi\}$, $C(x_2) = \{x_2\}$, with all of these clusters being $\equC$-equivalent, 
$\TC{C(\rr_\varphi)} = \{\type\}$ for  $\type = t^\varrho_{\mathfrak M}(\rr_\varphi) = \{\Diamond^n \top \mid n < \omega\}$,
and $\dom\,\Type_{C(\rr_\varphi)}=\dom\,\Type_{C(\rr_\psi)}=\dom\,\Type_{C(x_2)}=\ATC{C(\rr_\varphi)} = \{\emptyset\}$. This gives 
$\Type_{C(\rr_\varphi)}(\emptyset)= \{\ftype^{\varphi,\psi}_{\mathfrak M}(\rr_\varphi), \ftype^{\varphi,\psi}_{\mathfrak M}(x_1)\}$, 
where
%
\begin{align*}
& \ftype^{\varphi,\psi}_{\mathfrak M}(\rr_\varphi) = \{p, \neg q, \neg\Diamond p, \Diamond \neg q, \Diamond\Diamond p, \Diamond\Diamond\neg q, \varphi, \psi \}, \\ 
& \ftype^{\varphi,\psi}_{\mathfrak M}(x_1) = \{\neg p, \neg q, \Diamond p, \Diamond\neg q, \Diamond\Diamond p, \Diamond\Diamond \neg q, \neg \varphi, \psi \}.
\end{align*}
On the other hand, $\Type_{C(\rr_\psi)}(\emptyset) = \{\ftype^{\varphi,\psi}_{\mathfrak M}(\rr_\psi)\}$ and $\Type_{C(x_2)}(\emptyset) = \{\ftype^{\varphi,\psi}_{\mathfrak M}(x_2)\}$, where
\begin{align*}
& \ftype^{\varphi,\psi}_{\mathfrak M}(\rr_\psi) = \{\neg p, \neg q, \neg\Diamond p, \neg\Diamond\neg q, \neg\Diamond \Diamond p, 
 \neg\Diamond \Diamond\neg q, \neg \varphi, \neg \psi \}, \\ 
& \ftype^{\varphi,\psi}_{\mathfrak M}(x_2) = \{\neg p, q, \neg\Diamond p, \neg\Diamond\neg q, \neg\Diamond \Diamond p, 
 \neg\Diamond \Diamond\neg q, \neg \varphi, \psi \}.
\end{align*}
%
Then $\p(\rr_\varphi)$, $\p(x_1)$, $\p(\rr_\psi)$, and $\p(x_2)$ are all different, but they share the same $\varrho$-index $\emptyset$ and the same mosaic
$\MM{C(\rr_\varphi)} = \{\Type_{C(\rr_\varphi)}, \ \Type_{C(\rr_\psi)}, \Type_{C(x_2)}\}$
with $\dom\,\MM{C(\rr_\varphi)} = \{\emptyset\}$. As $\Diamond\type\subseteq\type$, we have $\iset{\MM{C(\rr_\varphi)}} = \emptyset$. 

We have $C(\rr_\varphi) R C(\rr_\varphi)$ but $\Type_{C(\rr_\varphi)} \not\twoheadrightarrow \Type_{C(\rr_\varphi)}$ as 
$\ftype^{\varphi,\psi}_{\mathfrak M}(\rr_\varphi) \not \twoheadrightarrow \ftype^{\varphi,\psi}_{\mathfrak M}(\rr_\varphi)$
because $p \in \ftype^{\varphi,\psi}_{\mathfrak M}(\rr_\varphi)$ but $\Diamond p \notin \ftype^{\varphi,\psi}_{\mathfrak M}(\rr_\varphi)$.
So $\p(\rr_\varphi) R^\dag \p(\rr_\varphi)$ does not hold. 
In fact, it is not hard to check that $\mathfrak M^\dag$ is isomorphic to $\mathfrak M$. For example,
we do not have $\p(\rr_\psi) R^\dag \p(\rr_\psi)$ as $\ftype^{\varphi,\psi}_{\mathfrak M}(\rr_\psi) \not \twoheadrightarrow \ftype^{\varphi,\psi}_{\mathfrak M}(\rr_\psi)$ because $\neg q \in \ftype^{\varphi,\psi}_{\mathfrak M}(\rr_\psi)$ but $\Diamond \neg q\notin \ftype^{\varphi,\psi}_{\mathfrak M}(\rr_\psi)$. But we
obtain $\p(\rr_\varphi) R^\dag \p(x_1)$ because they share the same cluster-type $\Type_{C(\rr_\varphi)}$, and $\p(x_1) R^\dag \p(x_1)$ because $\ftype^{\varphi,\psi}_{\mathfrak M}(x_1) \twoheadrightarrow \ftype^{\varphi,\psi}_{\mathfrak M}(x_1)$.
\hfill$\dashv$
\end{example}

\begin{example}\label{ex:irrset}
We now illustrate the role of the set $\iset{M}$ in the definition of $R^\dag$. Consider the model $\mathfrak M$ below and $\varphi = p \land \neg \Diamond q$ and $\psi = \neg p \land \Diamond r$ with $\varrho = \{p\}$ and $\sigma = \{p, q, r\}$. (The form of $\varphi$, $\psi$ is not important here, but $\sub(\varphi,\psi)$ is.)
%
%
\begin{center}
\begin{tikzpicture}[>=latex,line width=0.4pt,xscale = 1,yscale = 1]
\node at (-1,0) {$\mathfrak M$};
\node[point,fill=black,scale = 0.6,label=left:{\footnotesize $x'$}] (f1) at (0,.4) {};
\node[point,fill=black,scale = 0.6,label=left:{\footnotesize $x$},label=right:{\footnotesize $p,r$}] (f0) at (0,-.4) {};
\draw[->] (f0) to (f1);
\node[point,fill=black,scale = 0.6,label=left:{\footnotesize $y'$},label=right:{\footnotesize $q$}] (p1) at (2.5,.4) {};
\node[point,fill=black,scale = 0.6,label=left:{\footnotesize $y$},label=right:{\footnotesize $p$}] (p0) at (2.5,-.4) {};
\draw[->] (p0) to (p1);
\node at (1.3,-.4) {$\simr$};
\node at (1.3,.4) {$\simr$};
\end{tikzpicture}
\end{center}
Then $t^\varrho_{\mathfrak M}(x) = t^\varrho_{\mathfrak M}(y) = \type$ with $\Diamond \type \not \subseteq \type$. Let $\ftype^{\varphi,\psi}_{\mathfrak M}(x) = \ftype_x$ and $\ftype^{\varphi,\psi}_{\mathfrak M}(y) = \ftype_y$. Then $\ftype_x \not \twoheadrightarrow \ftype_x$ because 
$r \in \ftype^{\varphi,\psi}_{\mathfrak M}(x)$ and $\Diamond r \notin \ftype^{\varphi,\psi}_{\mathfrak M}(x)$,
while $\ftype_y \twoheadrightarrow \ftype_y$ because $\Diamond p \notin \sub(\varphi,\psi)$; we also have $\ftype_x \not \twoheadrightarrow \ftype_y$ and $\ftype_y \not \twoheadrightarrow \ftype_x$. Now, consider 
\[
\p_x = (\ftype_x,\{p\},\Type_{C(x)}, M), \quad 
\p_y = (\ftype_y,\{p\},\Type_{C(y)}, M),
\]
where $\TC{C(x)} = \{t\}$, $\cC{C(x)} = \{C(x), C(y)\}$, and $M = \{\Type_{C(x)}, \Type_{C(y)}\}$ with $\dom\,M=\ATC{C(x)}=\bigl\{\{p\}\bigr\}$,
$\Type_{C(x)}(\{p\}) = \{\ftype_x\}$, and $\Type_{C(y)}(\{p\}) = \{\ftype_y\}$.
Then $M \not \twoheadrightarrow M$ and $\iset{M}=\bigl\{\{p\}\bigr\}$.
By the last item in the definition of $R^\dag$, neither $\p_x R^\dag \p_x$ nor $\p_y R^\dag \p_y$ holds because $\{p\}\in\iset{M}$.
However, without the condition $\atype\notin\iset{M}$ in the definition,
we would have $\p_y R^\dag \p_y$ but still not $\p_x R^\dag \p_x$, which would destroy the $\varrho$-bisimilarity of $\p_x$ and $\p_y$.
\hfill$\dashv$
\end{example}

%
\begin{lemma}\label{l:Rwtrans}
The relation $R^\dag$ on $W^\dag$ is weakly transitive. 
\end{lemma}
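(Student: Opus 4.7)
The plan is a case analysis on the mosaic-equality pattern among three points $\p_i = (\ftype_i, \atype_i, \Type_i, M_i) \in W^\dag$ related by $\p_1 R^\dag \p_2$ and $\p_2 R^\dag \p_3$, from which we need to derive $\p_1 = \p_3$ or $\p_1 R^\dag \p_3$. The algebraic workhorse throughout is \eqref{trans}: the relation $\twoheadrightarrow$ is transitive on $\varphi,\psi$-types, on cluster-types, and on mosaics.

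A straightforward inspection of all three clauses in the definition of $R^\dag$ shows that every edge $\p R^\dag \p'$ entails $M \twoheadrightarrow M'$ together with either $\Type = \Type'$ or $\Type \twoheadrightarrow \Type'$. Composing the two hypotheses and applying transitivity yields $M_1 \twoheadrightarrow M_3$ and $\Type_1 = \Type_3$ or $\Type_1 \twoheadrightarrow \Type_3$. When the three mosaics are not all equal, this already suffices: if $M_1 \ne M_3$, the first clause certifies $\p_1 R^\dag \p_3$; if $M_1 = M_3$ but $M_2$ differs, the derived $M_1 \twoheadrightarrow M_1$ puts $(\p_1, \p_3)$ into the second clause, giving $\p_1 R^\dag \p_3$ whenever $\p_1 \ne \p_3$. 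The degenerate situations $\p_1 = \p_2$ or $\p_2 = \p_3$ are trivial because the surviving edge is literally the desired $\p_1 R^\dag \p_3$. When $M_1 = M_2 = M_3 = M$ and $M \twoheadrightarrow M$ (second clause), the same transitivity argument goes through: proper edges supply $\Type_i = \Type_{i+1}$ or $\Type_i \twoheadrightarrow \Type_{i+1}$ and self-loops contribute $\ftype_i \twoheadrightarrow \ftype_i$, and composition either returns a proper edge certified by the second clause or is absorbed into a self-loop.

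The main obstacle is the remaining case $M_1 = M_2 = M_3 = M$ with $M \not\twoheadrightarrow M$, where the third clause is in force: a proper $\twoheadrightarrow$-edge $\p R^\dag \p'$ is certified only when the source cluster-type $\Type$ satisfies $\Type(\btype) = \emptyset$ for every $\btype \in \iset{M}$, and a self-loop $\p R^\dag \p$ requires $\atype \notin \iset{M}$. The key check is that this side condition persists under composition. If both edges are proper and of $\twoheadrightarrow$-kind, the emptiness of $\Type_1$ on $\iset{M}$ is already given by the certification of the first edge, and transitivity supplies $\Type_1 \twoheadrightarrow \Type_3$. If one edge is an equality of cluster-types, the emptiness condition transfers unchanged across the equality, and transitivity again supplies the composite $\twoheadrightarrow$-relation. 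Self-loops are absorbed exactly as before, since $\p_1 = \p_2$ or $\p_2 = \p_3$ makes $\p_1 R^\dag \p_3$ coincide with the surviving edge. In every sub-case we arrive at $\p_1 = \p_3$ or $\p_1 R^\dag \p_3$, which is precisely the weak transitivity of $R^\dag$.
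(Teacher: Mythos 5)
Your proof follows the same route as the paper's: reduce to $\p_1\ne\p_2$ and $\p_2\ne\p_3$, split on the pattern of equalities among the mosaics $M_1,M_2,M_3$, compose the cluster-type relations using the transitivity of $\twoheadrightarrow$ \eqref{trans}, and, in the critical case $M_1=M_2=M_3=M$ with $M\not\twoheadrightarrow M$, check that the side condition $\Type(\btype)=\emptyset$ for all $\btype\in\iset{M}$ survives composition (with self-loops disposed of beforehand). Your handling of that critical case is correct and is exactly the paper's argument.

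However, the ``workhorse'' claim you open with is false as stated: it is \emph{not} true that every edge $\p R^\dag \p'$ entails $M\twoheadrightarrow M'$. An edge falling under the third clause has $M=M'$ with $M\not\twoheadrightarrow M$, so no $\twoheadrightarrow$-relation between the mosaics is available; indeed, if your claim were true, the case you call the main obstacle could not contain any edge at all. What is true, and what your argument needs, is the weaker fact that every edge gives ($M=M'$ or $M\twoheadrightarrow M'$) together with ($\Type=\Type'$ or $\Type\twoheadrightarrow\Type'$). With this repair your ``mosaics not all equal'' case still goes through, but the justification should be adjusted: when $M_1\ne M_3$, at least one of the two edges falls under the first clause and hence supplies a genuine $M_i\twoheadrightarrow M_{i+1}$ and a genuine $\Type_i\twoheadrightarrow\Type_{i+1}$; composing these with the equality or $\twoheadrightarrow$ provided by the other edge yields $M_1\twoheadrightarrow M_3$ and the \emph{strict} relation $\Type_1\twoheadrightarrow\Type_3$ that the first clause requires --- note that the disjunct $\Type_1=\Type_3$ of your composed statement would not by itself certify a clause-one edge. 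So the slip is local and easily fixed, and the rest of the argument matches the paper's proof.
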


\begin{proof}
Suppose $\p R^\dag \p' R^\dag \p''$ and $\p \ne \p''$, where 
%
%
\[
\p=(\ftype,\atype,\Type, M),\quad \p'=(\ftype',\atype',\Type', M'),\quad \p''=(\ftype'',\atype'',\Type'', M'').
\]
We need to show,
by a straightforward checking of all the cases in the definition of $R^\dag$, that $\p R^\dag \p''$. This  is trivial if $\p = \p'$ or $\p' = \p''$, so we assume that $\p \ne \p'$ and $\p' \ne \p''$.

Suppose first that $M \ne M''$. We need to show that $M \twoheadrightarrow M''$ and $\Type \twoheadrightarrow \Type''$.
\begin{description} 
\item[\rm\emph{Case} $M \ne M'$, $M' \ne M''$:] By the definition of $R^\dag$, we have $M \twoheadrightarrow M' \twoheadrightarrow M''$ and $\Type \twoheadrightarrow \Type' \twoheadrightarrow \Type''$, which gives $M \twoheadrightarrow M''$ and $\Type \twoheadrightarrow \Type''$ by \eqref{trans}.

\item[\rm\emph{Case} $M \ne M'$, $M'= M''$:] Then $M \twoheadrightarrow M'$ and $\Type \twoheadrightarrow \Type'$ because $\p R^\dag \p'$, and so $M \twoheadrightarrow M''$. Then no matter whether $M' \twoheadrightarrow M''$ or $M' \not\twoheadrightarrow M''$,
we have either $\Type' = \Type''$ or  $\Type' \twoheadrightarrow \Type''$ as $\p' R^\dag \p''$ and $\p'\ne\p''$.
Thus, $\Type \twoheadrightarrow \Type''$ by \eqref{trans}.

%

\item[\rm\emph{Case} $M = M'$, $M' \ne M''$:] Then $M' \twoheadrightarrow M''$ and $\Type' \twoheadrightarrow\Type''$ because $\p' R^\dag \p''$, and so $M \twoheadrightarrow M''$. 
Then no matter whether $M\twoheadrightarrow M'$ or $M\not\twoheadrightarrow M'$,
either $\Type= \Type'$ or  $\Type\twoheadrightarrow \Type'$, because $\p R^\dag \p'$ and $\p\ne\p'$.
Thus, $\Type \twoheadrightarrow \Type''$ by \eqref{trans}.
%
\end{description}

Now suppose $M = M''$. If $M \ne M'$, then $M' \ne M''$. By the definition of $R^\dag$, we have $M \twoheadrightarrow M' \twoheadrightarrow M$ and $\Type \twoheadrightarrow \Type' \twoheadrightarrow \Type''$, which gives $M \twoheadrightarrow M$ and $\Type \twoheadrightarrow \Type''$ by \eqref{trans}. As $\p\ne\p''$, these imply $\p R^\dag\p''$.

Finally, suppose $M = M' = M''$. Then two cases are possible.
\begin{description} 
\item[\rm\emph{Case} $M \twoheadrightarrow M$:] As $\p \ne \p''$, we need to show that $\Type = \Type''$ or $\Type \twoheadrightarrow \Type''$. As $\p R^\dag \p'$ and $\p \ne \p'$, we have $\Type = \Type'$ or $\Type \twoheadrightarrow \Type'$. Similarly, as $\p' R^\dag \p''$ and $\p' \ne \p''$, we have $\Type' = \Type''$ or $\Type' \twoheadrightarrow \Type''$, which yields the required.

\item[\rm\emph{Case} $M \not\twoheadrightarrow M$:] We need to show that $\Type = \Type''$ 
or $(\Type \twoheadrightarrow\Type''$ and $\Type(\btype)=\emptyset$ for all $\btype\in\iset{M})$.
As $\p R^\dag \p'$ and $\p\ne\p'$, we have $\Type = \Type'$ 
or $(\Type \twoheadrightarrow\Type'$ and $\Type(\btype)=\emptyset$ for all $\btype\in\iset{M})$.
Similarly, as $\p' R^\dag \p''$ and $\p'\ne\p''$, we have $\Type' = \Type''$ 
or $(\Type' \twoheadrightarrow\Type''$ and $\Type'(\btype)=\emptyset$ for all $\btype\in\iset{M})$.
By \eqref{trans}, it follows that $\Type = \Type''$ 
or $(\Type \twoheadrightarrow\Type''$ and $\Type(\btype)=\emptyset$ for all $\btype\in\iset{M})$. As $\p\ne\p''$, these imply $\p R^\dag\p''$.
\end{description}
This completes the proof of the lemma.
\end{proof}

The next lemma says that $R^\dag$ contains the smallest $\sub(\varphi,\psi)$-filtration:

\begin{lemma}\label{Rpreserved}
For all $x,y \in W$, if $xRy$, then $\p(x) R^\dag \p(y)$.
\end{lemma}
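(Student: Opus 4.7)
The plan is to proceed by case analysis on the three regimes of the definition of $R^\dag$ applied to $\p(x) = (\ftype,\atype,\Type,M)$ and $\p(y) = (\ftype',\atype',\Type',M')$. Three general observations drive most of the work: \eqref{rto} gives $\ftype \twoheadrightarrow \ftype'$ immediately from $xRy$; \eqref{crc} gives $\Type \twoheadrightarrow \Type'$ whenever $C(x) \ne C(y)$; and when $C(x) = C(y)$ both $\Type = \Type'$ and $M = M'$ hold by definition. So the real work consists in (a) producing the mosaic-level arrow $M \twoheadrightarrow M'$ when $M \ne M'$ and (b) verifying the conditions involving $\iset{M}$ in the regime $M = M'$, $M \not\twoheadrightarrow M$.

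In the case $M \ne M'$, we necessarily have $C(x) \ne C(y)$, so \eqref{crc} takes care of $\Type \twoheadrightarrow \Type'$. For $M \twoheadrightarrow M'$, I would fix any $\Type_{C''} \in M$ with $C'' \in \cC{C(x)}$ and construct a target as follows. Lemma~\ref{l:fundament}$(c)$ yields a $\varrho$-maximal $D \in \cC{C(x)}$ with $C'' R^r D$, and Lemma~\ref{rho-clusters} gives $d \in D$ with $t^\varrho_{\mathfrak M}(d) = t^\varrho_{\mathfrak M}(x)$. Since $xRy$ implies $\Diamond t^\varrho_{\mathfrak M}(y) \subseteq t^\varrho_{\mathfrak M}(x) = t^\varrho_{\mathfrak M}(d)$, Lemma~\ref{l:fundament}$(b)$ supplies $y' \in W$ with $dRy'$ and $t^\varrho_{\mathfrak M}(y') = t^\varrho_{\mathfrak M}(y)$, so $C(y') \in \cC{C(y)}$ and hence $\Type_{C(y')} \in M'$. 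Transitivity of $R^s$ on clusters then yields $C'' R C(y')$, while $C'' \ne C(y')$ (otherwise $\cC{C(x)} = \cC{C(y)}$ and so $M = M'$, a contradiction), so \eqref{crc} delivers $\Type_{C''} \twoheadrightarrow \Type_{C(y')}$ as required.

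The case $M = M'$, $M \twoheadrightarrow M$ is immediate from the opening observations (applied with either $\Type = \Type'$ when $C(x)=C(y)$, or $\Type \twoheadrightarrow \Type'$ via \eqref{crc} otherwise). In the remaining regime $M = M'$, $M \not\twoheadrightarrow M$, when $\p(x) \ne \p(y)$ and $C(x) \ne C(y)$ I would show $C(x)$ is not $\varrho$-maximal: if it were, Lemma~\ref{rho-clusters} would yield $u \in C(x)$ with $t^\varrho_{\mathfrak M}(u) = t^\varrho_{\mathfrak M}(y)$, and $\varrho$-maximality plus $C(x) R y$ would force $y \in C(x)$; then \eqref{irrall} gives the required $\Type(\btype) = \emptyset$ for all $\btype \in \iset{M}$. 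When $\p(x) = \p(y)$, I need $\atype \notin \iset{M}$; assuming $\atype \in \iset{M}$ for contradiction, the singleton/irreflexivity consequences noted after \eqref{singleton} pin $x$ and $y$ down as the unique irreflexive points in $C(x)$ and $C(y)$ respectively with $\at^\varrho_{\mathfrak M} = \atype$. The sub-case $y \in C(x)$ yields $x = y$ by uniqueness, contradicting the irreflexivity of $x$ against $xRy$; the sub-case $y \notin C(x)$ closes by invoking \eqref{irrall} (since $\Type_{C(x)}(\atype) \ne \emptyset$) to force $C(x)$ to be $\varrho$-maximal, then using Lemma~\ref{rho-clusters} and the singleton property to identify the resulting witness $u \in C(x)$ with $x$ itself, producing $t^\varrho_{\mathfrak M}(x) = t^\varrho_{\mathfrak M}(y)$ and thus, via $xRy$, the forbidden inclusion $\Diamond t^\varrho_{\mathfrak M}(x) \subseteq t^\varrho_{\mathfrak M}(x)$. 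I expect the main obstacle to lie exactly in this last sub-case: one has to coordinate singleton-ness, $\varrho$-maximality, and Lemma~\ref{l:fundament}$(b)$ carefully, as a priori $M = M'$ does not by itself guarantee $\cC{C(x)} = \cC{C(y)}$, and the realisation of $t^\varrho_{\mathfrak M}(y)$ inside $C(x)$ has to be extracted from the combination of all four components of $\p$ and the descriptive properties of $\mathfrak M$.
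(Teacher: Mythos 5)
There is a genuine gap, and it is exactly the one you flag yourself at the end: in the regime $M=M'$, $M\not\twoheadrightarrow M$, both of your sub-arguments (showing $C(x)$ is not $\varrho$-maximal when $\p(x)\ne\p(y)$ and $C(x)\ne C(y)$, and deriving $\atype\notin\iset{M}$ when $\p(x)=\p(y)$) invoke Lemma~\ref{rho-clusters} to realise $t^\varrho_{\mathfrak M}(y)$ inside a $\varrho$-maximal $C(x)$. But Lemma~\ref{rho-clusters} only gives $\TC{C(x)}=t^\varrho_{\mathfrak M}(C(x))$; to conclude that some $u\in C(x)$ has $t^\varrho_{\mathfrak M}(u)=t^\varrho_{\mathfrak M}(y)$ you need $t^\varrho_{\mathfrak M}(y)\in\TC{C(x)}$, i.e.\ $\cC{C(x)}=\cC{C(y)}$, and $M=M'$ alone does not give this (distinct $\equC$-classes can have equal mosaics). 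So as written, the third case of your proof is not closed, and you explicitly leave it as the ``main obstacle'' rather than resolving it.

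The missing step is precisely the paper's auxiliary claim \eqref{notequ}: if $\cC{C(x)}\ne\cC{C(y)}$ and $xRy$, then $\MM{C(x)}\twoheadrightarrow\MM{C(y)}$; its contrapositive, under the case hypothesis $\MM{C(x)}=\MM{C(y)}\not\twoheadrightarrow\MM{C(y)}$, yields $\cC{C(x)}=\cC{C(y)}$, after which your maximality and $\iset{M}$ arguments (which match the paper's) go through. The good news is that you already have this claim in all but name: the construction you give in the case $M\ne M'$ (pass from an arbitrary $\Type_{C''}\in\MM{C(x)}$ to a $\varrho$-maximal $D\in\cC{C(x)}$ via \eqref{tomax}/Lemma~\ref{l:fundament}$(c)$, realise $t^\varrho_{\mathfrak M}(x)$ at some $d\in D$, use descriptiveness to find $y'$ with $dRy'$ and $t^\varrho_{\mathfrak M}(y')=t^\varrho_{\mathfrak M}(y)$, then apply \eqref{crc}) uses the hypothesis $M\ne M'$ only to rule out $C''=C(y')$; if you instead assume only $\cC{C(x)}\ne\cC{C(y)}$, then $C''\ne C(y')$ is immediate since $C''\in\cC{C(x)}$ and $C(y')\in\cC{C(y)}$. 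Restating your Case-1 argument at this level of generality, as the paper does, both delivers the first case and supplies the missing equality of $\equC$-classes in the third. Apart from this, your treatment follows the paper's route (same case split on the definition of $R^\dag$, same use of \eqref{rto}, \eqref{crc}, \eqref{trans}, \eqref{irrall}, \eqref{singleton}), with only cosmetic differences such as using Lemma~\ref{l:fundament}$(b)$ in place of Lemma~\ref{bisim-lemma} and transitivity of $\Rs$ in place of composing $\twoheadrightarrow$ via \eqref{trans}.
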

\begin{proof}
Suppose we have $xRy$.
To begin with, we claim that
\begin{equation}
\label{notequ}
 \mbox{if $\cC{C(x)}\ne\cC{C(y)}$ then $\MM{C(x)} \twoheadrightarrow \MM{C(y)}$.}
\end{equation}
Indeed, 
take any $\Type\in \MM{C(x)}$ and let $C\in \cC{C(x)}$ be such that $\Type_{C} =\Type$. By \eqref{tomax},
there is a $\varrho$-maximal $D\in \cC{C(x)}$ with $C=D$ or $\Type_C\twoheadrightarrow \Type_{D}$.
So, by Lemma~\ref{rho-clusters}, there is $x'\in D$ with $t^\varrho_{\mathfrak M}(x)=t^\varrho_{\mathfrak M}(x')$.
As $xRy$ and $\mathfrak M$ is descriptive, it follows from Lemma~\ref{bisim-lemma} that there is $y'$ with $x'R y'$ and $t^\varrho_{\mathfrak M}(y')=t^\varrho_{\mathfrak M}(y)$. Then $C(y')\in\cC{C(y)}\ne\cC{C(x)}=\cC{C(x')}=\cC{D}$. Therefore, 
$\Type_D\twoheadrightarrow\Type_{C(y')}$ by \eqref{crc}, and so $\Type_C\twoheadrightarrow\Type_{C(y')}$ by \eqref{trans}, as required.


Now we show that $\p(x) R^\dag \p(y)$ follows from \eqref{notequ} in all cases of the definition of $R^\dag$.
%
Assume first that $\MM{C(x)}\ne \MM{C(y)}$. Then $\cC{C(x)}\ne\cC{C(y)}$, and so $\MM{C(x)} \twoheadrightarrow \MM{C(y)}$ by \eqref{notequ},
and $\Type_{C(x)}\twoheadrightarrow \Type_{C(y)}$ by \eqref{crc}, as required.

	
Assume next that $\MM{C(x)}=\MM{C(y)}$ and $\MM{C(x)}\twoheadrightarrow \MM{C(y)}$. 
By \eqref{crc}, $\Type_{C(x)} \not\twoheadrightarrow \Type_{C(y)}$ implies $C(x)= C(y)$, and so $\Type_{C(x)}=\Type_{C(y)}$.
So we have $\p(x) R^\dag \p(y)$ when $\p(x)\ne\p(y)$.
 If $\p(x)=\p(y)$ then $\type_{\mathfrak{M}}^{\varphi,\psi}(x)\twoheadrightarrow \type_{\mathfrak{M}}^{\varphi,\psi}(y)$ follows by
 \eqref{rto}, and so we also have $\p(x) R^\dag \p(y)$.
%
%

Finally, assume that $\MM{C(x)}=\MM{C(y)}$, $\MM{C(x)}\not\twoheadrightarrow \MM{C(y)}$. 
Then \eqref{notequ} implies that $\cC{C(x)}=\cC{C(y)}$. There are two cases:
\begin{description}
\item[\rm\emph{Case} $C(x)$ is $\varrho$-maximal:] 
Then $C(x)=C(y)$, and so $\Type_{C(x)}=\Type_{C(y)}$. Thus, we have $\p(x)R^\dag\p(y)$ if $\p(x)\ne\p(y)$.
If $\p(x)=\p(y)$, then $\ftype_{\mathfrak{M}}^{\varphi,\psi}(x)\twoheadrightarrow  \ftype_{\mathfrak{M}}^{\varphi,\psi}(x)$ by 
\eqref{rto}. As $\at^\varrho_{\mathfrak M}(x)=\at^\varrho_{\mathfrak M}(y)$ and $C(x)=C(y)$, we have 
$t^\varrho_{\mathfrak M}(x)=t^\varrho_{\mathfrak M}(y)$, and so $\Diamond t^\varrho_{\mathfrak M}(x)\subseteq t^\varrho_{\mathfrak M}(x)$
by $xRy$. Therefore, $\at^\varrho_{\mathfrak M}(x)\notin\iset{\MM{C(x)}}$, and so $\p(x)R^\dag\p(y)$.

\item[\rm\emph{Case} $C(x)$ is not $\varrho$-maximal:] Then, by \eqref{irrall},
$\Type_{C(x)}(\atype)=\emptyset$ for every $\atype\in\iset{\MM{C(x)}}$. In particular,
as $\ftype_{\mathfrak{M}}^{\varphi,\psi}(x)\in\Type_{C(x)}(\at^\varrho_{\mathfrak M}(x))$, it follows that 
$\at^\varrho_{\mathfrak M}(x)\notin\iset{\MM{C(x)}}$.
If $\p(x)=\p(y)$ then we have $\ftype_{\mathfrak{M}}^{\varphi,\psi}(x)\twoheadrightarrow  \ftype_{\mathfrak{M}}^{\varphi,\psi}(x)$ by 
\eqref{rto}, and so $\p(x)R^\dag\p(y)$.
If $\p(x)\ne\p(y)$ and $C(x)=C(y)$, then $\Type_{C(x)}=\Type_{C(y)}$, and so $\p(x)R^\dag\p(y)$.
And if $\p(x)\ne\p(y)$ and $C(x)\ne C(y)$, then $\Type_{C(x)}\twoheadrightarrow\Type_{C(y)}$ by \eqref{crc}, and so  $\p(x)R^\dag\p(y)$ again.
\end{description}
%
%
%
This completes the proof of the lemma.
\end{proof}

The next lemma says that $R^\dag$ is contained in the largest $\sub(\varphi,\psi)$-filtration:

\begin{lemma}\label{l:maxfiltr}
If $\p(x)R^\dag\p(y)$, then $\chi\in\ftype^{\varphi,\psi}_{\mathfrak M}(y)$ implies 
$\Diamond\chi\in\ftype^{\varphi,\psi}_{\mathfrak M}(x)$, for every $\Diamond \chi \in \sub(\varphi, \psi)$.
\end{lemma}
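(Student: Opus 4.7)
The plan is to reduce the statement to showing that $\p(x) R^\dag \p(y)$ implies $\ftype^{\varphi,\psi}_{\mathfrak M}(x) \twoheadrightarrow \ftype^{\varphi,\psi}_{\mathfrak M}(y)$. Indeed, once this is established, the conclusion follows immediately from the first clause of the definition of $\twoheadrightarrow$ applied to $\varphi,\psi$-types, since $\chi \in \ftype^{\varphi,\psi}_{\mathfrak M}(y)$ with $\Diamond\chi \in \sub(\varphi,\psi)$ directly forces $\Diamond\chi \in \ftype^{\varphi,\psi}_{\mathfrak M}(x)$. Write $\p(x) = (\ftype,\atype,\Type,M)$ and $\p(y) = (\ftype',\atype',\Type',M')$, so the goal becomes $\ftype \twoheadrightarrow \ftype'$.

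I would then proceed by case analysis following the three cases in the definition of $R^\dag$. When $M \ne M'$, the definition hands us $\Type \twoheadrightarrow \Type'$, and since $\ftype \in \ftype^{\varphi,\psi}_{\mathfrak M}(C(x))$ and $\ftype' \in \ftype^{\varphi,\psi}_{\mathfrak M}(C(y))$, the definition of $\twoheadrightarrow$ between cluster-types yields $\ftype \twoheadrightarrow \ftype'$. In the remaining cases ($M=M'$), whenever the defining clause already includes $\Type \twoheadrightarrow \Type'$, the same argument applies. The subcase $\p = \p'$ carries the explicit hypothesis $\ftype \twoheadrightarrow \ftype$, which yields $\ftype \twoheadrightarrow \ftype'$ as $\ftype = \ftype'$.

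The one delicate subcase is $\p \ne \p'$ with $\Type = \Type'$ (occurring in both Cases 2 and 3 of the definition). Here I would first observe that since $\atype = \ftype \cap \varrho$ and $\atype' = \ftype'\cap\varrho$, equality of all of $M$, $\Type$, and $\ftype$ would force $\p = \p'$; hence $\ftype \ne \ftype'$. Pick any cluster $C$ in $\mathfrak M$ with $\Type_C = \Type$; by definition of $\Type_C$ there exist points $u, v \in C$ with $\ftype^{\varphi,\psi}_{\mathfrak M}(u) = \ftype$ and $\ftype^{\varphi,\psi}_{\mathfrak M}(v) = \ftype'$, and $u \ne v$ because $\ftype \ne \ftype'$. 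Since $u,v$ lie in the same cluster and are distinct, $u R v$, so \eqref{rto} gives $\ftype \twoheadrightarrow \ftype'$ as required.

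The main obstacle I anticipate is precisely this last subcase: one must notice that the seemingly weaker hypothesis $\Type = \Type'$ (rather than $\Type \twoheadrightarrow \Type'$) nevertheless forces the two $\varphi,\psi$-types to be realised by a genuinely $R$-connected pair of points inside a single cluster, and this is what rescues the filtration condition. All remaining bookkeeping (e.g., the role of $\iset{M}$ in Case~3) is irrelevant for the present lemma, since the filtration condition captured by $\twoheadrightarrow$ is satisfied in every clause of the definition of $R^\dag$.
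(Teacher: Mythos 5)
Your reduction target is too strong, and that is exactly where the argument breaks. The lemma only asserts the ``largest filtration'' condition ($\chi\in\ftype^{\varphi,\psi}_{\mathfrak M}(y)$ implies $\Diamond\chi\in\ftype^{\varphi,\psi}_{\mathfrak M}(x)$), whereas you set out to prove $\ftype^{\varphi,\psi}_{\mathfrak M}(x)\twoheadrightarrow\ftype^{\varphi,\psi}_{\mathfrak M}(y)$, which additionally demands $\Diamond\chi\in\ftype^{\varphi,\psi}_{\mathfrak M}(y)\Rightarrow\Diamond\chi\in\ftype^{\varphi,\psi}_{\mathfrak M}(x)$. That stronger claim is false: in Example~\ref{ex:basic3} we have $\p(\rr_\varphi)\,R^\dag\,\p(x_1)$ (they share the cluster-type $\Type_{C(\rr_\varphi)}$), yet $\ftype^{\varphi,\psi}_{\mathfrak M}(\rr_\varphi)\not\twoheadrightarrow\ftype^{\varphi,\psi}_{\mathfrak M}(x_1)$, since $\Diamond p\in\ftype^{\varphi,\psi}_{\mathfrak M}(x_1)$ but $\Diamond p\notin\ftype^{\varphi,\psi}_{\mathfrak M}(\rr_\varphi)$. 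The step that fails in your delicate subcase is the appeal to \eqref{rto} for two \emph{distinct} points $u,v$ of a single cluster: when $u$ is irreflexive, $uRv$ does \emph{not} yield $\ftype^{\varphi,\psi}_{\mathfrak M}(u)\twoheadrightarrow\ftype^{\varphi,\psi}_{\mathfrak M}(v)$, because the only witness for some $\Diamond\chi$ true at $v$ may be $u$ itself, and weak transitivity then gives no $\chi$-successor of $u$ (this is precisely the $\rr_\varphi,x_1$ configuration above). Property \eqref{rto}, read literally, would indeed license your step, but it fails in exactly this intra-cluster situation; note that the paper only ever invokes it when $\ftype^{\varphi,\psi}_{\mathfrak M}(x)=\ftype^{\varphi,\psi}_{\mathfrak M}(y)$, where its conclusion does hold.

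The good news is that your construction of the witnesses is the right idea and already gives the lemma without any detour through $\twoheadrightarrow$. Having found, in a single cluster with cluster-type $\Type$, points $u\ne v$ with $\ftype^{\varphi,\psi}_{\mathfrak M}(u)=\ftype=\ftype^{\varphi,\psi}_{\mathfrak M}(x)$ and $\ftype^{\varphi,\psi}_{\mathfrak M}(v)=\ftype'=\ftype^{\varphi,\psi}_{\mathfrak M}(y)$, you have $uRv$, and then for every $\Diamond\chi\in\sub(\varphi,\psi)$ with $\chi\in\ftype'$ the truth condition for $\Diamond$ gives $\Diamond\chi\in\ftype^{\varphi,\psi}_{\mathfrak M}(u)=\ftype^{\varphi,\psi}_{\mathfrak M}(x)$ --- which is all the lemma asks. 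This is essentially the paper's own proof (it takes the cluster $C(y)$, $v=y$ and $u=y'$ with $\ftype^{\varphi,\psi}_{\mathfrak M}(y')=\ftype^{\varphi,\psi}_{\mathfrak M}(x)$). The remaining cases of your case analysis, where the definition of $R^\dag$ supplies $\Type\twoheadrightarrow\Type'$ or $\ftype\twoheadrightarrow\ftype$, are fine; the repair needed is only to conclude the filtration condition, not full $\twoheadrightarrow$, in the $\Type=\Type'$, $\p\ne\p'$ subcase.
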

\begin{proof}
As $\ftype_{\mathfrak{M}}^{\varphi,\psi}(x)\twoheadrightarrow \ftype_{\mathfrak{M}}^{\varphi,\psi}(y)$ implies that,
for every $\Diamond \chi \in \sub(\varphi, \psi)$, whenever $\chi\in\ftype^{\varphi,\psi}_{\mathfrak M}(y)$ then
$\Diamond\chi\in\ftype^{\varphi,\psi}_{\mathfrak M}(x)$, and $\Type_{C(x)}\twoheadrightarrow \Type_{C(y)}$ implies
$\ftype_{\mathfrak{M}}^{\varphi,\psi}(x)\twoheadrightarrow \ftype_{\mathfrak{M}}^{\varphi,\psi}(y)$, we only need to check those
cases where we have $\p(x)R^\dag\p(y)$ but neither $\ftype_{\mathfrak{M}}^{\varphi,\psi}(x)\twoheadrightarrow \ftype_{\mathfrak{M}}^{\varphi,\psi}(y)$
nor $\Type_{C(x)}\twoheadrightarrow \Type_{C(y)}$ holds. 

An inspection of the definition of $R^\dag$ shows that this can only happen when $\p(x)\ne\p(y)$, $\MM{C(x)}=\MM{C(y)}$ and 
$\Type_{C(x)}= \Type_{C(y)}$. In this case, we have
$\ftype_{\mathfrak{M}}^{\varphi,\psi}(x)\in\Type_{C(x)}\bigl(\at^\varrho_{\mathfrak M}(x)\bigr)=\Type_{C(y)}\bigl(\at^\varrho_{\mathfrak M}(x)\bigr)$, and so there is $y'\in C(y)$ with $\ftype_{\mathfrak{M}}^{\varphi,\psi}(y')=\ftype_{\mathfrak{M}}^{\varphi,\psi}(x)$.
Then $y'\ne y$, as otherwise we would have $\ftype_{\mathfrak{M}}^{\varphi,\psi}(x)=\ftype_{\mathfrak{M}}^{\varphi,\psi}(y)$,
and so $\at^\varrho_{\mathfrak M}(x)=\at^\varrho_{\mathfrak M}(y)$ as well, contradicting $\p(x)\ne\p(y)$.
Now it follows that, for every $\Diamond \chi \in \sub(\varphi, \psi)$, if $\chi\in\ftype^{\varphi,\psi}_{\mathfrak M}(y)$, then
$\Diamond\chi\in\ftype^{\varphi,\psi}_{\mathfrak M}(y')=\ftype^{\varphi,\psi}_{\mathfrak M}(x)$, as required.
\end{proof}	

As a consequence of Lemmas~\ref{Rpreserved} and  \ref{l:maxfiltr}, we obtain the usual `filtration lemma' for $\mathfrak M^\dag$ that can be proved by induction on $\chi$:

\begin{lemma}\label{lem:for} 
For any $\chi\in \sub(\varphi,\psi)$ and any $\p=(\ftype,\atype,\Type,M) \in W^\dag$, we have $\mathfrak{M}^\dag,\p\models \chi$ iff $\chi\in \ftype$.
\end{lemma}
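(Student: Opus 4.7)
The plan is a routine induction on the structure of $\chi \in \sub(\varphi,\psi)$. The non-trivial content has already been absorbed into Lemmas~\ref{Rpreserved} and~\ref{l:maxfiltr}, which together say that $R^\dag$ is sandwiched between the smallest and the largest $\sub(\varphi,\psi)$-filtrations of $R$; given this, only a standard filtration argument is needed.

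For the base step, I first note that every $\p \in W^\dag$ arises as $\p(x)$ for some $x \in W$, and any two representatives $x, x'$ with $\p(x) = \p(x')$ share the same first coordinate $\ftype = \ftype^{\varphi,\psi}_{\mathfrak M}(x) = \ftype^{\varphi,\psi}_{\mathfrak M}(x')$. Hence for $\chi = p \in \sigma \cap \sub(\varphi,\psi)$, the definition $\mathfrak{v}^\dag(p) = \{\p(x) \mid x \in \mathfrak{v}(p)\}$ gives $\mathfrak{M}^\dag, \p \models p$ iff $p \in \ftype$, and the constants $\bot, \top$ are immediate. The Boolean cases $\chi = \neg \chi'$ and $\chi = \chi_1 \land \chi_2$ follow at once from the induction hypothesis, using that $\sub(\varphi,\psi)$ is closed under negation and subformulas.

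The modal case $\chi = \Diamond \chi' \in \sub(\varphi, \psi)$ is where the two preceding lemmas do their work. For ($\Leftarrow$), assume $\Diamond \chi' \in \ftype$ and pick $x \in W$ with $\p(x) = \p$; then $\mathfrak{M}, x \models \Diamond \chi'$, so some $y$ satisfies $xRy$ and $\chi' \in \ftype^{\varphi,\psi}_{\mathfrak M}(y)$ (since $\chi' \in \sub(\varphi,\psi)$). Lemma~\ref{Rpreserved} yields $\p(x) R^\dag \p(y)$, and the induction hypothesis applied to $\p(y)$ gives $\mathfrak{M}^\dag, \p(y) \models \chi'$, hence $\mathfrak{M}^\dag, \p \models \Diamond \chi'$. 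For ($\Rightarrow$), suppose $\mathfrak{M}^\dag, \p \models \Diamond \chi'$ and fix $\p' = \p(y) \in W^\dag$ with $\p R^\dag \p'$ and $\mathfrak{M}^\dag, \p' \models \chi'$. The induction hypothesis places $\chi'$ in the first component of $\p'$, i.e., $\chi' \in \ftype^{\varphi,\psi}_{\mathfrak M}(y)$; applying Lemma~\ref{l:maxfiltr} with $\p = \p(x)$ then gives $\Diamond \chi' \in \ftype^{\varphi,\psi}_{\mathfrak M}(x) = \ftype$, as required.

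Since the filtration-sandwich lemmas are already in hand, I do not expect any real obstacle. The only delicate point is well-definedness in the base case: one must observe that membership $p \in \ftype^{\varphi,\psi}_{\mathfrak M}(x)$ is determined solely by $\p(x)$ because $\ftype^{\varphi,\psi}_{\mathfrak M}(x)$ is literally the first coordinate of $\p(x)$, so the atomic clause does not depend on the choice of representative $x$.
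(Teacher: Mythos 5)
Your proof is correct and is exactly the argument the paper intends: the paper states the lemma as the standard filtration lemma proved by induction on $\chi$, with Lemma~\ref{Rpreserved} supplying the $(\Leftarrow)$ direction and Lemma~\ref{l:maxfiltr} the $(\Rightarrow)$ direction of the $\Diamond$-case, just as you do. Your remark on well-definedness of the atomic case (membership of $p$ being read off the first coordinate of $\p(x)$) is a sensible addition and raises no issues.
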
	
%
%
%
%
%
%

As a consequence of Lemma~\ref{lem:for} we obtain:

\begin{corollary}
$\mathfrak{M}^\dag,\p(\rr_\varphi) \models \varphi$ and $\mathfrak{M}^\dag,\p(\rr_\psi) \models \neg \psi$. 
\end{corollary}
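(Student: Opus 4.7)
The plan is to apply Lemma~\ref{lem:for} directly, noting that the two required truth facts hold in the original model $\mathfrak M$ and that both $\varphi$ and $\neg\psi$ lie in $\sub(\varphi,\psi)$.

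First I would unpack the first component of $\p(\rr_\varphi)$: by definition $\p(\rr_\varphi) = (\ftype^{\varphi,\psi}_{\mathfrak M}(\rr_\varphi), \at^\varrho_{\mathfrak M}(\rr_\varphi), \Type_{C(\rr_\varphi)}, \MM{C(\rr_\varphi)})$. Since $\mathfrak M, \rr_\varphi \models \varphi$ and $\varphi \in \sub(\varphi,\psi)$, we have $\varphi \in \ftype^{\varphi,\psi}_{\mathfrak M}(\rr_\varphi)$ by the definition $\ftype^{\varphi,\psi}_{\mathfrak M}(x) = t^\sigma_{\mathfrak M}(x) \cap \sub(\varphi,\psi)$. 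Applying Lemma~\ref{lem:for} with $\chi = \varphi$ and $\p = \p(\rr_\varphi)$ gives $\mathfrak M^\dag, \p(\rr_\varphi) \models \varphi$.

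Next I would do the symmetric argument for $\rr_\psi$. Recall that $\sub(\varphi,\psi)$ is, by definition in the Preliminaries, closed under negation (it contains all subformulas \emph{together with their negations}). So $\neg\psi \in \sub(\varphi,\psi)$. From $\mathfrak M, \rr_\psi \models \neg\psi$ we get $\neg\psi \in t^\sigma_{\mathfrak M}(\rr_\psi)$, hence $\neg\psi \in \ftype^{\varphi,\psi}_{\mathfrak M}(\rr_\psi)$. Applying Lemma~\ref{lem:for} with $\chi = \neg\psi$ and $\p = \p(\rr_\psi)$ yields $\mathfrak M^\dag, \p(\rr_\psi) \models \neg\psi$.

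There is no real obstacle here: the corollary is a routine consequence of the filtration lemma together with the observation that the standing hypotheses about $\mathfrak M$ (namely $\mathfrak M, \rr_\varphi \models \varphi$ and $\mathfrak M, \rr_\psi \models \neg\psi$, recorded at the start of Section~\ref{sec:deciding}) translate into membership of $\varphi$ and $\neg\psi$ in the respective $\varphi,\psi$-types. The only point worth double-checking is that $\neg\psi$ (not just $\psi$) counts as a formula in $\sub(\varphi,\psi)$, and this is built into the definition of $\sub$ given in Section~\ref{prelims}.
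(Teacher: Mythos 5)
Your proposal is correct and is exactly the argument the paper intends: the corollary is stated as an immediate consequence of Lemma~\ref{lem:for}, using that $\varphi\in\ftype^{\varphi,\psi}_{\mathfrak M}(\rr_\varphi)$ and $\neg\psi\in\ftype^{\varphi,\psi}_{\mathfrak M}(\rr_\psi)$, with the closure of $\sub(\varphi,\psi)$ under negation guaranteeing $\neg\psi\in\sub(\varphi,\psi)$. Nothing further is needed.
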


Define a binary relation $\bis^\dag$ on $W^\dag$ by taking  
\[
(\ftype,\atype,\Type,M)\,\bis^\dag\, (\ftype',\atype',\Type',M')\quad\mbox{iff}\quad \atype=\atype',\ M=M'.
\]
%
%
%

\begin{lemma}
The relation $\bis^\dag$ is a $\varrho$-bisimulation on $\mathfrak{M}^\dag$ with $\p(\rr_\varphi) \bis^\dag \p(\rr_\psi)$.   
\end{lemma}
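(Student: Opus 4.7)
The plan is to verify the (atom) and (move) conditions of a $\varrho$-bisimulation for $\bis^\dag$, and then to confirm the root relation. The (atom) clause is immediate: by the definition of $\mathfrak v^\dag$ and Lemma~\ref{lem:for}, for any $\p = (\ftype,\atype,\Type,M) \in W^\dag$ and $p \in \varrho$, we have $\mathfrak{M}^\dag, \p \models p$ iff $p \in \ftype$ iff $p \in \atype$; hence $\at^\varrho_{\mathfrak M^\dag}(\p) = \atype$, and $\p \bis^\dag \p'$ immediately yields $\at^\varrho_{\mathfrak M^\dag}(\p) = \at^\varrho_{\mathfrak M^\dag}(\p')$.

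For (move), given $\p \bis^\dag \p'$ with $\p = (\ftype,\atype,\Type,M)$, $\p' = (\ftype',\atype,\Type',M)$, and $\p R^\dag \q$ for $\q = (\ftype_q, \atype_q, \Type_q, M_q)$, I will construct $\q' = (\ftype_{q'}, \atype_q, \Type_{q'}, M_q)$ with $\p' R^\dag \q'$ and $\q \bis^\dag \q'$. The central idea is to pick $\Type_{q'} \in M_q$ to be a $\varrho$-maximal cluster-type reached from $\Type'$ via \eqref{tomax}, so that either $\Type_{q'} = \Type'$ or $\Type' \twoheadrightarrow \Type_{q'}$; by $\varrho$-maximality and Lemma~\ref{rho-clusters}, $\Type_{q'}(\atype_q) \ne \emptyset$, so a suitable $\ftype_{q'}$ exists and $\q' \in W^\dag$ by \eqref{abstract}. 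Transitivity \eqref{trans} gives $\Type' \twoheadrightarrow \Type_{q'}$ whenever $\Type' \ne \Type_{q'}$. In case $M \ne M_q$, this suffices directly for $\p' R^\dag \q'$. In cases $M = M_q$, when $\Type'$ is not $\varrho$-maximal we have $\Type_{q'} \ne \Type'$ so $\q' \ne \p'$ is automatic, and in the subcase $M \not\twoheadrightarrow M$ the side condition $\Type'(\btype) = \emptyset$ for $\btype \in \iset{M}$ is supplied by \eqref{irrall}.

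The main obstacle is the edge case where $M = M_q$, $\Type'$ is $\varrho$-maximal, $\atype = \atype_q$, and $|\Type'(\atype)| = 1$, forcing $\ftype_{q'} = \ftype'$ and hence $\q' = \p'$; then $\p' R^\dag \p'$ requires $\ftype' \twoheadrightarrow \ftype'$ (plus $\atype \notin \iset{M}$ in the subcase $M \not\twoheadrightarrow M$). To resolve this: if $M$ contains another $\varrho$-maximal cluster-type $\Type^* \ne \Type'$, I pick $\Type_{q'} = \Type^*$ to make $\q' \ne \p'$. Otherwise $\Type'$ is the unique $\varrho$-maximal cluster-type in $M$, and under $M \twoheadrightarrow M$ the combination of \eqref{tomax} and transitivity \eqref{trans} forces $\Type' \twoheadrightarrow \Type'$ (any witness $\Type''' \in M$ with $\Type' \twoheadrightarrow \Type'''$ returns to $\Type'$ via \eqref{tomax}), giving $\ftype' \twoheadrightarrow \ftype'$ and hence $\p' R^\dag \p'$. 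Under $M \not\twoheadrightarrow M$, a closer inspection using \eqref{singleton} and \eqref{irrall} shows that when $\atype \in \iset{M}$ the existence of a $\p R^\dag \q$ with $\atype_q = \atype$ is already blocked in $\mathfrak M^\dag$, so the problematic configuration cannot arise at all.

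Finally, for $\p(\rr_\varphi) \bis^\dag \p(\rr_\psi)$: since $\mathfrak M, \rr_\varphi \simr \mathfrak M, \rr_\psi$, Lemma~\ref{bisim-lemma} yields $t^\varrho_{\mathfrak M}(\rr_\varphi) = t^\varrho_{\mathfrak M}(\rr_\psi)$, hence $\at^\varrho_{\mathfrak M}(\rr_\varphi) = \at^\varrho_{\mathfrak M}(\rr_\psi)$; moreover the equal $\varrho$-types place $C(\rr_\varphi) \equC C(\rr_\psi)$ directly from the definition of $\equC$, so $\MM{C(\rr_\varphi)} = \MM{C(\rr_\psi)}$, and the two $\varrho$-indices coincide, giving $\p(\rr_\varphi) \bis^\dag \p(\rr_\psi)$ as required.
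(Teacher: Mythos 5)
Your overall skeleton matches the paper's proof (reduce to a $\varrho$-maximal cluster-type via \eqref{tomax} and \eqref{trans}, use \eqref{abstract} for membership, \eqref{irrall} for the side condition when $\Type'$ is not the type of a maximal cluster), and your treatment of {\bf (atom)}, of the root pair, and of the cases $M \ne M_q$ and ``$\Type'$ not $\varrho$-maximal'' is essentially the paper's. But your handling of the edge case ($M = M_q$, $\Type'$ $\varrho$-maximal, $\atype = \atype_q$, $|\Type'(\atype)|=1$) has a genuine gap. First, the move ``pick another $\varrho$-maximal cluster-type $\Type^* \ne \Type'$'' does not by itself give $\p' R^\dag \q'$: the definition of $R^\dag$ requires $\Type' = \Type^*$ or $\Type' \twoheadrightarrow \Type^*$, and two $\varrho$-maximal cluster-types in the same mosaic need not be $\twoheadrightarrow$-related; worse, in the subcase $M \not\twoheadrightarrow M$ one additionally needs $\Type'(\btype)=\emptyset$ for all $\btype \in \iset{M}$, which is outright false for a $\varrho$-maximal $\Type'$ whenever $\iset{M} \ne \emptyset$ (maximal cluster-types are nonempty on their whole domain), so choosing any $\Type^* \ne \Type'$ cannot work there. (In the subcase $M \twoheadrightarrow M$ the detour is also unnecessary: as in the paper, take the witness $\Type''$ of $M \twoheadrightarrow M$ with $\Type' \twoheadrightarrow \Type''$ and maximalise; if you land back on $\Type'$ you get $\Type' \twoheadrightarrow \Type'$, hence $\ftype' \twoheadrightarrow \ftype'$, without any uniqueness assumption.)

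Second, and more importantly, in the subcase $M \not\twoheadrightarrow M$ your resolution only rules out the configuration $\atype \in \iset{M}$ (which is indeed blocked, as in the paper's last case), but when $\atype \notin \iset{M}$ the edge case does arise and you are forced to take $\q' = \p'$, which requires $\ftype' \twoheadrightarrow \ftype'$ --- and nothing in your argument establishes this. This is precisely where the paper does the real work: from $\atype \notin \iset{M}$ it extracts a cluster $C$ with $\MM{C}=M$ and a $\varrho$-type $\type$ with $\type\cap\varrho=\atype$ and $\Diamond\type\subseteq\type$, and then uses Lemma~\ref{rho-clusters} to conclude that every $\varrho$-maximal $D \in \cC{C}$ contains either two points of $\varrho$-type $\type$ or a single reflexive such point; via \eqref{rto} (and, if $\Type'$ is not the type of a maximal cluster of that class, via \eqref{irrall} and \eqref{tomax}) this yields either $\ftype' \twoheadrightarrow \ftype'$ or an alternative $\varphi,\psi$-type in $\Type'(\atype)$. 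Without some version of this semantic argument the case is not closed, so as written the proof is incomplete.
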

\begin{proof} 
As $\mathfrak M,\rr_\varphi \simr \mathfrak M,\rr_\psi$, we have $t^\varrho_{\mathfrak M}(\rr_\varphi)=t^\varrho_{\mathfrak M}(\rr_\psi)$.
So $\at^\varrho_{\mathfrak M}(\rr_\varphi)=\at^\varrho_{\mathfrak M}(\rr_\psi)$ and
$\cC{C(\rr_\varphi)}=\cC{C(\rr_\psi)}$, and hence $\p(\rr_\varphi) \bis^\dag \p(\rr_\psi)$.   

Condition {\bf (atom)} follows from Lemma~\ref{lem:for}. To prove {\bf (move)}, suppose $\p_1 \bis^\dag \p_1'$ and $\p_1 R^\dag \p_2$, 
for $\p_1=(\ftype_1,\atype_1,\Type_1,M_1)$, $\p_1'=(\ftype_1',\atype_1,\Type_1',M_1)$, and $\p_2=(\ftype_2,\atype_2,\Type_2,M_2)$.
We show that there is $\p_2'$ with $\p_2 \bis^\dag \p_2'$ and $\p_1' R^\dag \p_2'$, that is, 
there exist $\ftype_2'$ and $\Type_2'$ such that $\p_2'=(\ftype_2',\atype_2,\Type_2',M_2)\in W^\dag$ and $\p_1' R^\dag \p_2'$.
We proceed by case distinction.
\begin{description}
\item[\rm\emph{Case} $M_1\ne M_2$:] As $\p_1 R^\dag \p_2$, we have $M_1 \twoheadrightarrow M_2$.
As $\Type_1'\in M_1$, there is some $\Type_2'\in M_2$ with $\Type_1'\twoheadrightarrow\Type_2'$.
By \eqref{trans} and \eqref{tomax}, we may assume that $\Type_2'=\Type_D$ for some $\varrho$-maximal $D$, and so
$\Type_2'(\atype_2)\ne\emptyset$. Take any $\ftype_2'\in \Type_2'(\atype_2)$.
Then $\Type_2'$ and $\ftype_2'$ are as required, by \eqref{abstract}.
	
\item[\rm\emph{Case} $M_1 = M_2$, $M_1 \twoheadrightarrow M_1$:] 
As $\Type_1'\in M_1$, there is $\Type_2'\in M_1$ with $\Type_1'\twoheadrightarrow\Type_2'$.
By \eqref{trans} and \eqref{tomax}, we may assume that $\Type_2'=\Type_D$ for some $\varrho$-maximal $D$, and so
$\Type_2'(\atype_2)\ne\emptyset$. Take any $\ftype_2'\in \Type_2'(\atype_2)$.
Then $\p_2'=(\ftype_2',\atype_2,\Type_2',M_1)\in W^\dag$, by \eqref{abstract}.
As $\Type_1'\twoheadrightarrow\Type_2'$ implies $\ftype_1'\twoheadrightarrow\ftype_2'$, we have $\p_1' R^\dag \p_2'$ if 
$\p_1'=\p_2'$ or $\p_1'\ne\p_2'$.  
	
\item[\rm\emph{Case} $M_1 = M_2$, $M_1 \not\twoheadrightarrow M_2$, $\atype_1\ne\atype_2$:] 
If $\Type_1'(\atype_2)\ne\emptyset$, then take any $\ftype_2'\in \Type_2'(\atype_2)$. Then $\p_2'=(\ftype_2',\atype_2,\Type_1',M_1)$ is as required.
If $\Type_1'(\atype_2)=\emptyset$, then $\Type_1'\ne\Type_D$ for any $\varrho$-maximal $D$. Thus, by \eqref{irrall}, 
$\Type_1'(\atype)=\emptyset$ for every $\atype\in\iset{M_1}$. Also, by \eqref{tomax}, there is some $\varrho$-maximal $D$ with
$\Type_D\in M_1$ and $\Type_1'\twoheadrightarrow\Type_D$.
Take any $\ftype_2'\in \Type_D(\atype_2)$. By \eqref{abstract}, $\p_2'=(\ftype_2',\atype_2,\Type_D,M_1)$ is as required.
%
	
\item[\rm\emph{Case} $M_1 = M_2$, $M_1 \not\twoheadrightarrow M_2$, $\atype_1=\atype_2$:] We claim that $\atype_1\notin\iset{M_1}$.
Indeed, suppose $\atype_1\in\iset{M_1}$. As $\p_1 R^\dag\p_2$, $\p_1\ne\p_2$ follows. If $\Type_1=\Type_2$ held, then $\{\ftype_1\}=\Type_1(\atype_1)=\Type_2(\atype_1)=\{\ftype_2\}$ by \eqref{singleton}, and so $\ftype_1=\ftype_2$ would follow, contradicting $\p_1\ne\p_2$. So $\Type_1\ne\Type_2$, and thus $\p_1 R^\dag\p_2$ implies that 
$\Type_1(\atype)=\emptyset$ for all $\atype\in\iset{M_1}$. As $\ftype_1\in\Type_1(\atype_1)$, $\atype_1\notin\iset{M_1}$ follows, as required.


As $\atype_1\notin\iset{M_1}$, 
there exist $C$ with $M_1=\MM{C}$ and $\type\in\TC{C}$ such that $\type\cap\varrho=\atype_1$ and $\Diamond\type\not\subseteq\type$.
By Lemma~\ref{rho-clusters}, it follows that, for every $\varrho$-maximal $D\in\cC{C}$, either $(i)$ there are at least two $x\in D$ with $t^{\varrho}_{\mathfrak{M}}(x)=\type$ or $(ii)$ $xRx$ for the single $x\in D$ with $t^{\varrho}_{\mathfrak{M}}(x)=\type$. 

If $\Type_1'=\Type_{D}$ for a $\varrho$-maximal $D\in\cC{C}$, then in case $(ii)$ we have $\Type_1'(\atype_1)=
\{\ftype^{\varphi,\psi}_{\mathfrak M}(x)\}=\{\ftype_1'\}$ and $\ftype_1'\twoheadrightarrow\ftype_1'$.
Therefore, $\p_2'=(\ftype_1',\atype_1,\Type_1',M_1)$ is as required.
In case $(i)$, if $\ftype_1' \not\twoheadrightarrow \ftype_1'$, then there exists $\ftype\in\Type_1'(\atype_1)$ with $\ftype\ne\ftype_1'$,
and so $\p_2'=(\ftype,\atype_1,\Type_1',M_1)$ is as required.
	
If $\Type_1'\not=\Type_{D}$ for any $\varrho$-maximal $D\in\cC{C}$, then $\Type_1'(\atype)=\emptyset$ for every $\atype\in\iset{M_1}$,
by \eqref{irrall}. By \eqref{tomax},  there is a $\varrho$-maximal $D$ such that $\Type_D\in M_1$ and $\Type_1'\twoheadrightarrow \Type_D$. Take any $\ftype_2'\in \Type_D(\atype_1)$. By \eqref{abstract}, $\p_2'=(\ftype_2',\atype_1,\Type_D,M_1)$ is as required,
\end{description}
%
completing the proof of the lemma.
\end{proof}

The results obtained above yield the following:

\begin{theorem}
Any given implication $\varphi \to \psi$ does not have an interpolant in $\wKF$ iff there are models $\mathfrak M_\varphi$ and $\mathfrak M_\psi$ satisfying the criterion of Lemma~\ref{criterion} and having size triple-exponential in $|\varphi|$ and $|\psi|$.
\end{theorem}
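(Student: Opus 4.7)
The $(\Leftarrow)$ direction is immediate from Lemma~\ref{criterion}: any witnessing pair of bisimilar models of any size suffices. So my plan focuses on the $(\Rightarrow)$ direction, which amounts essentially to bundling together the machinery built throughout this section.

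Suppose $\varphi\to\psi$ has no interpolant in $\wKF$. By Lemma~\ref{criterion} (applied in its canonical-model form, which the lemma explicitly allows) I obtain pointed descriptive $\sigma$-models $\mathfrak M_\varphi,\rr_\varphi$ and $\mathfrak M_\psi,\rr_\psi$ based on weakly transitive frames with $\mathfrak M_\varphi,\rr_\varphi\models\varphi$, $\mathfrak M_\psi,\rr_\psi\models\neg\psi$, and $\mathfrak M_\varphi,\rr_\varphi\simr\mathfrak M_\psi,\rr_\psi$ for $\varrho=\sig(\varphi)\cap\sig(\psi)$. I then form the disjoint union $\mathfrak M$ and pass to the filtered model $\mathfrak M^\dag=(\mathfrak F^\dag,\mathfrak v^\dag)$ constructed in this section. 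Lemma~\ref{l:Rwtrans} tells me that $\mathfrak F^\dag$ is weakly transitive; the corollary to Lemma~\ref{lem:for} gives $\mathfrak M^\dag,\p(\rr_\varphi)\models\varphi$ and $\mathfrak M^\dag,\p(\rr_\psi)\models\neg\psi$; and the final lemma of the section yields a $\varrho$-bisimulation $\bis^\dag$ on $\mathfrak M^\dag$ with $\p(\rr_\varphi)\bis^\dag\p(\rr_\psi)$. The mosaic/cluster-type counting performed just before the definition of $W^\dag$ bounds $|W^\dag|$ by $\mathcal{O}\bigl(2^{2^{2^{|\sub(\varphi,\psi)|}}}\bigr)$, which is triple-exponential in $|\varphi|+|\psi|$.

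To deliver the two \emph{separate} rooted witness models demanded by Lemma~\ref{criterion}, I take as $\mathfrak M_\varphi$ and $\mathfrak M_\psi$ the submodels of $\mathfrak M^\dag$ generated by $\p(\rr_\varphi)$ and $\p(\rr_\psi)$, respectively. Their sizes remain bounded by $|W^\dag|$; weak transitivity, the truth of $\varphi$ at $\p(\rr_\varphi)$ and of $\neg\psi$ at $\p(\rr_\psi)$, and the restriction of $\bis^\dag$ to these submodels are all standardly preserved under generated submodels. The only conceptual difficulty in the whole argument has really been dispatched earlier: the three-case definition of $R^\dag$ and the subtle role of $\iset{M}$ (cf.\ Example~\ref{ex:irrset}) were engineered precisely so that the two filtration inclusions (Lemmas~\ref{Rpreserved} and \ref{l:maxfiltr}) \emph{and} the $\varrho$-bisimulation lemma both go through; at the level of the theorem itself, only this bookkeeping wrap-up remains.
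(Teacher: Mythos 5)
Your proof is correct and takes essentially the same route as the paper, which also obtains the theorem directly by combining Lemma~\ref{criterion} with the filtration construction (weak transitivity of $R^\dag$, the truth preservation of Lemma~\ref{lem:for}, and the $\varrho$-bisimulation $\bis^\dag$) together with the triple-exponential bound on $|W^\dag|$. Your explicit passage to the submodels generated by $\p(\rr_\varphi)$ and $\p(\rr_\psi)$ is a harmless bookkeeping step that the paper leaves implicit.
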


Thus, to decide whether $\varphi \to \psi$ does not have an interpolant in $\wKF$, we can guess models $\mathfrak M_\varphi$ and $\mathfrak M_\psi$ of triple-exponential size in $|\varphi|$ and $|\psi|$ together with a binary relation $\simr$ between their points and check in polynomial time in the size of $\mathfrak M_\varphi$ and $\mathfrak M_\psi$ whether the conditions of Lemma~\ref{criterion} are met. 

\begin{theorem}
The IEP for $\wKF$ is decidable in \emph{\textsc{coN3ExpTime}}.
\end{theorem}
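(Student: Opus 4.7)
The plan is to package the triple-exponential bisimilar-model property established just above into a nondeterministic decision procedure for the complement of the IEP. Since all the structural work has already been done by the preceding construction of $\mathfrak M^\dag$, essentially nothing remains except a routine complexity bookkeeping step.

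First, I would invoke the preceding theorem: an implication $\varphi \to \psi$ lacks an interpolant in $\wKF$ if and only if there exist models $\mathfrak M_\varphi$, $\mathfrak M_\psi$ based on weakly transitive frames together with a relation $\bis$ between their domains whose total size is bounded by $f(|\varphi|+|\psi|) = \mathcal{O}\bigl(2^{2^{2^{|\sub(\varphi,\psi)|}}}\bigr)$ and that witness the criterion of Lemma~\ref{criterion}. The final remark of Lemma~\ref{criterion} that $\wKF$ is canonical is crucial here, as it ensures that we do not need to verify any topological or descriptiveness condition on the guessed frames beyond weak transitivity.

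Next, I would exhibit the nondeterministic algorithm for the complement of the IEP: on input $(\varphi,\psi)$, guess two models of size at most $f(|\varphi|+|\psi|)$, i.e.\ specify their domains, valuations, and accessibility relations, together with a binary relation $\bis$ between their domains; then deterministically verify $(a)$ that each frame satisfies the weak transitivity condition~\eqref{weaktr}, $(b)$ that $\mathfrak{M}_\varphi,\rr_\varphi \models \varphi$ and $\mathfrak{M}_\psi,\rr_\psi \models \neg\psi$ by standard modal model-checking, and $(c)$ that $\bis$ is a $\varrho$-bisimulation with $\rr_\varphi \bis \rr_\psi$, by exhaustively checking \textbf{(atom)} and \textbf{(move)} at every pair in $\bis$. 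Each of $(a)$, $(b)$, $(c)$ runs in time polynomial in the size of the guessed structures and in $|\varphi|+|\psi|$, hence in time triple-exponential in the input length.

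This places the complement of the IEP in \textsc{N3ExpTime}, and therefore the IEP itself in \textsc{coN3ExpTime}. Since every step reduces either to an already established result or to an elementary polynomial-time model-theoretic check, I do not anticipate any genuine obstacle; the one point requiring a moment's care is that the bound $f$ and the verification time both be measured against the bit-length of the input, which is immediate from $|\sub(\varphi,\psi)| \le 2(|\varphi|+|\psi|)$.
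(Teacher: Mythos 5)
Your proposal is correct and follows essentially the same route as the paper: it invokes the preceding triple-exponential bisimilar model property, then guesses the two models together with the relation $\bis$ and verifies weak transitivity, satisfaction of $\varphi$ and $\neg\psi$, and the $\varrho$-bisimulation conditions in time polynomial in the (triple-exponential) size of the guess, placing the complement of the IEP in \textsc{N3ExpTime}. Your observation that canonicity (the final remark of Lemma~\ref{criterion}) dispenses with checking descriptiveness is exactly the point the paper relies on as well.
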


\begin{remark}\label{r:filtr}
We can use the above construction to check whether a formula $\psi$ is in $\wKF$ as follows. We clearly have
$\psi\notin\wKF$ iff $\varphi\to\psi$ has no interpolant in $\wKF$, for $\varphi=\psi\lor\neg\psi$. In this case, $\sigma=\varrho=\sig(\psi)$, and if $\mathfrak M$ is a descriptive $\sigma$-model, then
$\p(x)=\p(y)$ iff $\ftype^{\varphi,\psi}_{\mathfrak M}(x)=\ftype^{\varphi,\psi}_{\mathfrak M}(y)$ and $\ftype^{\varphi,\psi}_{\mathfrak M}\bigl(C(x)\bigr)=\ftype^{\varphi,\psi}_{\mathfrak M}\bigl(C(y)\bigr)$.
Also, our filtration becomes the $\sub(\psi)$-filtration given in \cite{kudinov2024filtrations}.
Note that this filtration gives a double-exponential bound on the size of the model satisfying $\neg\psi$, which is not optimal as 
the decision problem for $\wKF$ is $\PSpace$-complete~\cite{DBLP:journals/tocl/Shapirovsky22}.
\end{remark}

 
\section{Lower Bound}\label{lower}

\begin{theorem}\label{t:lowerbound}
The IEP for $\wKF$ is \coNExpTime-hard.
\end{theorem}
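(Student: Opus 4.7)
The plan is to reduce from the complement of the standard \NExpTime-complete problem of tiling a $2^n \times 2^n$ grid by a given tiling system $T = (D, H, V, d_0)$, where $D$ is a finite set of tiles, $H, V \subseteq D \times D$ are the horizontal and vertical matching relations, and $d_0$ is the tile required at the origin $(0,0)$. Given $T$ and $n$ in unary, I would construct formulas $\varphi_T$ and $\psi_T$, of size polynomial in $|T|$ and $n$, such that $\varphi_T \to \psi_T \in \wKF$ and moreover this implication has an interpolant in $\wKF$ iff $T$ does \emph{not} admit a valid tiling. By Lemma~\ref{criterion}, the latter amounts to showing that valid tilings correspond exactly to pairs of $\varrho$-bisimilar $\wKF$-models witnessing $\varphi_T$ and $\neg \psi_T$.

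The encoding uses three groups of propositional variables. The shared signature $\varrho = \sig(\varphi_T) \cap \sig(\psi_T)$ contains $2n$ bit variables $x_0,\dots,x_{n-1},y_0,\dots,y_{n-1}$ for the binary representations of the grid coordinates, together with a few marker flags both sides must agree on (e.g.\ ``this point represents a grid cell'' and ``this point is the origin''). The non-common part of $\sig(\varphi_T)$ carries tile-labels $d \in D$ for the $\varphi$-side tiling; the non-common part of $\sig(\psi_T)$ carries ``shadow'' tile-labels for the $\psi$-side tiling plus one extra ``alarm'' variable. Then $\varphi_T$ asserts, from a chosen root, that below it lies a weakly-transitive structure containing, for every bit-pattern $(i,j) \in \{0,1\}^n \times \{0,1\}^n$, a point labelled by that pattern and by exactly one tile-label, with the origin bearing $d_0$ and horizontal/vertical adjacency (definable in terms of the coordinate bits) respecting $H$ and $V$. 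Dually, $\neg \psi_T$ asserts a structurally identical layout using the shadow tile-labels and requires that the alarm be raised somewhere, so $\psi_T$ itself says that under the coordinate structure the alarm is never raised.

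The two directions of the reduction then run as follows. If $T$ admits a tiling $\mathcal{T}\colon \{0,\dots,2^n-1\}^2 \to D$, I build $\mathfrak{M}_\varphi$ from $2^{2n}$ points, labelled by the bits of their coordinates and by $\mathcal{T}(i,j)$, arranged in a weakly transitive accessibility pattern making $\varphi_T$ true at the root; and I build $\mathfrak{M}_\psi$ in exactly the same way but with shadow labels and the alarm raised at one chosen cell, so $\neg \psi_T$ holds. Since the two models are structurally identical on $\varrho$, the position-respecting identity relation is a $\varrho$-bisimulation, witnessing via Lemma~\ref{criterion} the non-existence of an interpolant. Conversely, from any bisimilar models witnessing non-interpolation I extract a global tile assignment by reading off the $\varphi$-side labels at each coordinate pair and argue, by induction over the grid together with the adjacency formulas inside $\varphi_T$, that the bisimulation forces this assignment to respect $H$ and $V$, yielding a valid tiling. (To guarantee $\varphi_T \to \psi_T \in \wKF$ from the outset, I can, if necessary, weaken $\psi_T$ by a disjunct stating a violation of the coordinate structure, ensuring that whenever $\varphi_T$ holds $\psi_T$ trivially holds as well; the interpolation question remains nontrivial.)

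The main obstacle will be carrying out the ``yardstick'' construction inside $\wKF$ itself: the absence of full transitivity and the presence of clusters containing both reflexive and irreflexive points make it delicate to force a model to contain exactly $2^{2n}$ distinct coordinate-bearing points and to simulate grid adjacency locally. Using a single layer of successor clusters seen from the root via $\Diamond$, distinctness of coordinates in $\varrho$ already separates these points modulo the bisimulation, but ruling out spurious points that do not represent grid cells, and ensuring that the adjacency on the bits $(x_i, y_i)$ is witnessed by a genuine cluster edge rather than arising as an artefact of weak transitivity or of a cluster's internal reflexive structure, requires careful use of $\Box^+$-propagation and the marker flags in $\varrho$. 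Once the grid is rigidly pinned down by $\varphi_T$ and faithfully mirrored by $\neg \psi_T$, the rest of the argument reduces to a routine compatibility check matching the tiling constraints with the $\varrho$-bisimulation conditions.
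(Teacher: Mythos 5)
There is a genuine gap, and it lies at the very heart of the reduction. You let $\varphi_T$ itself assert that below the root there is, ``for every bit-pattern $(i,j)$, a point labelled by that pattern and by exactly one tile-label, \dots\ with horizontal/vertical adjacency respecting $H$ and $V$'' (and you let $\neg\psi_T$ assert a structurally identical layout). No polynomial-size modal formula can do this. First, the matching conditions are \emph{binary} constraints between pairs of sibling successors (the cell $(i,j)$ and the cell $(i\opluss 1,j)$ live at different points not connected by any controllable modal path), and the plain $\Diamond/\Box$ language cannot compare two arbitrary successors of a point. Second, and more decisively, if such a $\varphi_T$ existed then $\varphi_T$ would be satisfiable over $\wKF$-frames iff the tiling system had a solution, making $\wKF$-satisfiability \NExpTime-hard; but satisfiability for $\wKF$ is \PSpace-complete. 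So the difficulty you flag at the end (``carrying out the yardstick inside $\wKF$'') is not a technicality about clusters and weak transitivity: the global grid consistency provably cannot be enforced by either formula alone, and your converse direction (``read off the $\varphi$-side labels and use the adjacency formulas inside $\varphi_T$'') has nothing to induct on.

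The paper's proof circumvents exactly this obstacle by making the $\varrho$-bisimulation, not the formulas, the carrier of global consistency. There, $\varphi$ is tiny and grid-free: it only forces (via $\Diamond\Diamond p\wedge\neg\Diamond p$, as in Example~\ref{ex:basic}) an irreflexive root inside a non-degenerate cluster $C$ on which the shared variable $\evar$ holds and nowhere else; this same trick also makes $\varphi\to\psi$ valid outright, with $\psi$ of the shape $\chi\to(\Box\Box q\to q)$, rather than by an ad hoc weakening disjunct as in your parenthetical remark. The formula $\chi$ builds, on the $\psi$-side only, a binary tree whose $2^{2n}$ leaves enumerate the coordinates, and each leaf carries $\evar$ and has successors encoding the cells $(i,j)$, $(i\opluss 1,j)$, $(i,j\opluss 1)$, so that uniqueness of tiles and the $H,V$ constraints are checked \emph{locally}, among successors of a single leaf. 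Since $\evar$ holds in $\mathfrak M_\varphi$ only on the one cluster $C$, every leaf must be $\varrho$-bisimilar to a point of $C$; the shared variables $\bvar_k,\tvar$ then force the successors of $C$ to realise one tile per position, gluing the exponentially many local checks into a single well-defined tiling $\tau$. Without this synchronisation mechanism (a single ``spy'' cluster on one side matched against exponentially many leaves on the other), your reduction cannot be completed.
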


\begin{proof}
We show \NExpTime-hardness of interpolant non-existence by a reducion of the \emph{exponential torus tiling problem\/}. A \emph{tiling system} is a triple
$P=(T,H,V)$, where 
$T$ is a finite set of \emph{tile
	types} and $H,V \subseteq T \times T$ are the \emph{horizontal}
and \emph{vertical} matching conditions, respectively. 
An \emph{initial condition} for $P$ and $n>0$ takes the form $\bar{t} = (t_0,\dots,t_{n-1}) \in T^n$. A map 
$\tau \colon \{0,\dots,2^{n}-1\} \times \{0,\dots,2^{n}-1\} \to T$ is a
\emph{solution to $P$ and $\bar{t}$} if 
$\tau(i,0) = t_{i}$ for all $i < n$, and 
for all $i,j < 2^{n}$, the
following conditions hold (where $\,\opluss\,$ denotes addition modulo $2^n$):
\begin{itemize}
	\item[--] if $\tau(i,j) = t$ and $\tau(i \opluss 1,j) =
	t'$, then $(t,t') \in H$;
	
	\item[--] if $\tau(i,j) = t$ and $\tau(i,j \opluss 1) =
	t'$, then $(t,t') \in V$.
	
	%
\end{itemize}
It is well-known that the problem of deciding whether there is a
solution to given $P$ and $\bar{t}$ is \NExpTime-hard~\cite[Section 5.2.2]{DL-Textbook}.

Given a tiling system $P$ and an initial condition $\bar{t}$ of length $n>0$, we define formulas $\varphi$, $\psi$ of size polynomial in $|P|$ and $n$, such that, for $\sigma = \sig(\varphi) \cup \sig(\psi)$ and $\varrho = \sig(\varphi) \cap \sig(\psi)$,
\begin{align}
\nonumber
& \mbox{there is a solution for $P$ and $\bar{t}$\quad iff\quad}\\
\nonumber
& \hspace*{1cm}
\mbox{there exist $\sigma$-models $\mathfrak M_\varphi$ and $\mathfrak M_\psi$ based on frames for $\wKF$}\\
\label{iff}
& \hspace*{1cm}
 \mbox{with $\mathfrak{M}_\varphi,\rr_\varphi \simr \mathfrak{M}_\psi,\rr_\psi$, $\mathfrak M_\varphi, \rr_\varphi  \models \varphi$ and $\mathfrak M_\psi,\rr_\psi \models \neg\psi$.}
\end{align}
The shared signature $\varrho$ consists of
\begin{itemize}
\item[--]
a variable $\tvar$, for each tile type $t\in T$; 

\item[--]
variables $\bvar_0,\dots,\bvar_{2n-1}$, that serve as bits in the
binary representation of \emph{grid positions} $(i,j)$ with $i,j< 2^{n}$.
We will use $[\bvar=(i,j)]$ as a shorthand for the formula where 
$\bvar_0,\ldots,\bvar_{n-1}$ represent the horizontal coordinate $i$ and
$\bvar_n,\ldots,\bvar_{2n-1}$ the vertical coordinate $j$
(with $\bvar_0$ and $\bvar_n$ being the respective least significant bits); 
for instance, $[\bvar=(2,3)]$ stands for
$\neg \bvar_0\land \bvar_1\land\bigl(\bigwedge_{1< k <n}\neg{\bvar}_{k}\bigr)\land\bvar_{n}\land\bvar_{n+1}\land\bigwedge_{n+1< \ell <2n}\neg{\bvar}_{\ell}$; 

\item[--]
 a variable $\evar$ 
that will be used to establish connections between the two $\varrho$-bisimilar models that force the tiling 
matching conditions. 
\end{itemize}
The formula $\varphi$ is defined as
\[
\varphi= \evar\land(\Diamond\Diamond p \wedge \neg \Diamond p) \wedge \Box(\evar \rightarrow\Diamond p),
\]
to which we add the other symbols in $\varrho$ using tautologies.
Observe that $\varphi$ has the first formula of Example~\ref{ex:basic}~$(i)$  as a conjunct. Thus, 
if $\mathfrak M$ is based on a frame for $\wKF$ and 
$\mathfrak{M},x\models \varphi$, then
$x$ is irreflexive, $C(x)$ contains a point different from $x$, and $\evar$ is true everywhere in $C(x)$ and nowhere else $R$-accessible from $x$. 

Our formula $\psi$ takes form $\chi\to(\Box\Box q\to q)$ (cf.\ the second formula in Example~\ref{ex:basic}~$(i)$).
We next define $\chi$. 
To begin with, $\chi$ has conjuncts that
use variables $\avar_{0},\ldots,\avar_{2n-1}$ and variables $\lvar_{0},\ldots,\lvar_{2n}$ 
to generate a binary tree of depth $2n$ on nodes satisfying $\evar$ 
such that a counter implemented using $\avar_{0},\ldots,\avar_{2n-1}$ is realised at its leaves:
%
\begin{align}
\label{treefirst}
& \lvar_0\land\Box^+\!\!\!\bigwedge_{i<j\leq 2n}\!\!\!\neg(\lvar_i\land\lvar_j),\\
& \Box^{+}\bigl(\lvar_{i} \rightarrow \Diamond (\lvar_{i+1} \land \avar_{i}) \wedge \Diamond (\lvar_{i+1} \land \neg \avar_{i})\bigr),\quad\mbox{for $i<2n$},\\
\nonumber
& \Box\bigl(\lvar_{i+1} \land \avar_{i} \to \Box (\lvar_{j} \to \avar_{i})\bigr)\, \land\\
\label{treelast}
& \hspace*{1.7cm}
\Box\bigl(\lvar_{i+1} \land \neg \avar_{i} \to \Box (\lvar_{j} \to \neg \avar_{i}) \bigr),\quad\mbox{for $i<j\leq 2n$},\\
\label{alle}
& \Box^{+}(\lvar_{i} \to\evar),\quad\mbox{for $i\leq 2n$}.
\end{align}
Next, we express that any leaf making $[\avar=(i,j)]$ true has an $R$-successor making 
$\neg\evar\land[\bvar=(i,j)]$ and a unique tile-variable $\tvar$ true, by defining
\[
\grid:\quad  \bigwedge_{k<2n}(\avar_k\leftrightarrow\bvar_k),
\]
and then adding the following conjuncts to $\chi$:
%
\begin{align}
\label{gridgen}
&\Box \Bigl(\lvar_{2n} \to \Diamond \bigl(\neg\evar \land\grid\land\bigvee_{t\in T}\tvar\bigr)\Bigr),\\
\nonumber
& \Box\bigl(\lvar_{2n} \land \avar_{i} \to \Box (\neg\evar \to \avar_{i})\bigr)\,\land\\
\label{acont}
& \hspace*{2.5cm}
\Box\bigl(\lvar_{2n} \land \neg \avar_{i} \to \Box (\neg\evar \to \neg \avar_{i}) \bigr),\quad\mbox{for $i< 2n$},\\
\label{notiles}
& \Box\bigwedge_{t\ne t'\in T}\neg(\tvar\land\tvar'),\\
\label{tileunique}
& \Box\Bigl(\lvar_{2n}\to\bigwedge_{t\in T}\bigl(\Diamond(\neg\evar\land\grid\land\tvar)\to\Box(\neg\evar\land\grid\to\tvar)\bigr)\Bigr).
\end{align}
Recall that computing $\,\opluss\,$ for numbers $i<2^n$ in binary on $n$ bits is as follows: If $i=2^n-1$ then
flip all $1$-bits to $0$; otherwise, 
flip the first (when starting from the least significant bit) $0$-bit of $i$ to $1$, 
flip all $1$-bits of $i$ before the first $0$-bit to $0$, and leave all other bits of $i$ the same. So 
the following formulas, respectively, express that `$[\avar=(i,j)]$ and $[\bvar=(i\opluss 1,j)]$' and 
`$[\avar=(i,j)]$ and $[\bvar=(i,j\opluss 1)]$':
\begin{align*}
\gridx:\quad 
& \Bigl(\bigvee_{m<n}\bigl(\bvar_m\land\neg\avar_m\land
\bigwedge_{k<m}(\neg\bvar_k\land\avar_k)\land\bigwedge_{m<k<n}(\bvar_k\leftrightarrow\avar_k)\bigr)\,\lor\\
&\hspace*{4.5cm} 
\bigwedge_{m<n}(\neg\bvar_m\land\avar_m)\Bigr)\land
\bigwedge_{n\leq k<2n}(\bvar_k\leftrightarrow\avar_k),\\
\gridy:\quad 
& \Bigl(\bigvee_{n\leq m<2n}\bigl(\bvar_m\land\neg\avar_m\land
\bigwedge_{n\leq k<m}(\neg\bvar_k\land\avar_k)\land\bigwedge_{m<k<2n}(\bvar_k\leftrightarrow\avar_k)\bigr)\,\lor\\
&\hspace*{4.5cm} 
\bigwedge_{n\leq m<2n}(\neg\bvar_m\land\avar_m)\Bigr)\land
\bigwedge_{k<n}(\bvar_k\leftrightarrow\avar_k).
\end{align*}
%
%
We add the following conjuncts to $\chi$ to ensure the tiling matching conditions:
\begin{align}
\label{hmatch}
& \Box\Bigl(\lvar_{2n}\to\bigwedge_{t\in T}\bigl(\Diamond(\neg\evar\land\grid\land\tvar)\to\Box(\neg\evar\land\gridx\to\bigvee_{(t,t)'\in H}\tvar')\bigr)\Bigr),\\
\label{vmatch}
& \Box\Bigl(\lvar_{2n}\to\bigwedge_{t\in T}\bigl(\Diamond(\neg\evar\land\grid\land\tvar)\to\Box(\neg\evar\land\gridy\to\bigvee_{(t,t)'\in V}\tvar')\bigr)\Bigr).
\end{align}
Finally, we ensure that the initial condition $\bar{t}$ holds, that is $\tau(i,0)=t_{i}$ for $i<n$.  To this end, we add to $\chi$ the 
conjuncts 
\begin{equation}\label{init}
\Box\bigl(\lvar_{2n} \land [\avar=(i,0)] \to \Box(\neg\evar\land\grid\to \tvar_i)\bigr),\quad\mbox{for $i<n$.}
\end{equation}
It follows from the argument in 
Example~\ref{ex:basic}~$(i)$ that $(\varphi \to \psi) \in \wKF$. Below we show that \eqref{iff} holds. 

$(\Rightarrow)$ 
Suppose $\tau$ is a solution to $P$ and $\bar{t}$. 
We define $\sigma$-models $\mathfrak M_\varphi$ and $\mathfrak M_\psi$ as follows.
The underlying $\wKF$-frame of $\mathfrak M_\varphi$ consist of a two-element cluster $C$ having an irreflexive point $\rr_\varphi$ and 
a reflexive point $x$, and $C$ has $2^{2n}$ irreflexive and pairwise $R$-incomparable $R$-successors $w_{k,\ell}$, $k,\ell<2^n$.
The valuation in $\mathfrak M_\varphi$ is such that $p$ holds at $\rr_\varphi$, $\evar$ holds everywhere in $C$, and for $k,\ell<2^n$,
$\mathfrak M_\varphi,w_{k,\ell}\models[\bvar=(k,\ell)]\land\tvar$, where $t=\tau(k,\ell)$.
The underlying $\wKF$-frame of $\mathfrak M_\psi$ is the transitive closure of the following frame: First, take 
a full binary tree $(T,R_T)$ of depth $2n$, with an irreflexive root $\rr_\psi$,
all other nodes being reflexive, and having $2^{2n}$ leaves $e_{i,j}$, $i,j<2^n$.
Then, for each leaf $e_{i,j}\in T$, add $2^{2n}$ irreflexive $R$-successors $u^{k,\ell}_{i,j}$, $k,\ell<2^n$, such that 
$u^{k,\ell}_{i,j}$ and $u^{k',\ell'}_{i',j'}$ are $R$-incomparable whenever $(i,j,k,\ell)\ne(i',j',k',\ell')$. 
The valuation in $\mathfrak M_\psi$ is such that $q$ holds everywhere apart from $\rr_\psi$, $\evar$ holds everywhere in $T$,
$\lvar_0,\dots,\lvar_{2n}$ and $\avar_0,\dots,\avar_{2n-1}$ `mark' the nodes of the tree $(T,R_T)$ in such a way that, for $i,j< 2^n$,
$\mathfrak M_\psi,e_{i,j}\models\lvar_{2n}\land[\avar=(i,j)]$,
and for $i,j,k,\ell<2^n$,
$\mathfrak M_\psi,u^{k,\ell}_{i,j}\models[\avar=(i,j)]\land[\bvar=(k,\ell)]\land\tvar$, where $t=\tau(k,\ell)$.

It is 
straightforward to check that 
$\mathfrak M_\varphi, \rr_\varphi \models \varphi$ and $\mathfrak M_\psi,\rr_\psi\models \neg\psi$, and the relation
\[
\bis = (C\times T)\cup \big\{\bigl(w_{k,\ell},u_{i,j}^{k,\ell}\bigr)\mid i,j,k,\ell <2^n\bigr\}
\]
is a $\varrho$-bisimulation between $\mathfrak M_\varphi$ and $\mathfrak M_\psi$ with $\rr_\varphi\bis\rr_\psi$.\hfill $\dashv$
%

$(\Leftarrow)$ Suppose $\mathfrak{M}_\varphi,\rr_\varphi \simr \mathfrak{M}_\psi,\rr_\psi$ with $\mathfrak M_\varphi, \rr_\varphi \models \varphi$ and $\mathfrak M_\psi,\rr_\psi\models \neg\psi$,
for some $\sigma$-models $\mathfrak M_\varphi$ and $\mathfrak M_\psi$ based on frames for $\wKF$.
Thus, $\rr_\varphi$ is irreflexive, the cluster $C$ of $\rr_\varphi$ contains a point different from $\rr_\varphi$, and $\evar$ is true everywhere in $C$ and nowhere else in $\mathfrak M_\varphi$. As $\mathfrak M_\psi,\rr_\psi\models \neg\psi$ and  
by \eqref{treefirst}--\eqref{treelast}, $\rr_\psi$ is the root of a full binary tree of depth $2n$ having its leaves $e_{i,j}$, $i,j<2^n$, marked by 
$\lvar_{2n}$ and the corresponding formula $[\avar=(i,j)]$.
(Note that, as $\mathfrak M_\psi,\rr_\psi \models\Box\Box q\land\neg q$, none of $e_{i,j}$ is in $C(\rr_\psi)$.)
By \eqref{alle}, for all $i,j<2^n$, $\mathfrak M_\psi,e_{i,j}\models\evar$, and so we must have $\mathfrak M_\varphi, x_{i,j} \sim_\varrho \mathfrak M_\psi,e_{i,j}$ 
for some $x_{i,j}\in C$.
Thus, by \eqref{treelast} and \eqref{gridgen},
\begin{description}
\item[{\bf (t1)}]
for all $i,j<2^n$, $C$ has an $R$-successor $w$ such that $\mathfrak M_\varphi,w\models [\bvar=(i,j)]\land\tvar$ for some $t\in T$.
\end{description}
We claim that
\begin{description}
\item[{\bf (t2)}]
for all $i,j<2^n$, if $w$, $w'$ are $R$-successors of $C$ with $\mathfrak M_\varphi,w \models [\bvar=(i,j)]\land\tvar$ and
$\mathfrak M_\varphi,w'\models [\bvar=(i,j)]\land\tvar'$, then $t=t'$.
\end{description}
Indeed, by $\varrho$-bisimilarity, there exist $R$-successors $u$, $u'$ of $e_{i,j}$ such that 
$\mathfrak M_\psi,u\models \neg\evar\land[\bvar=(i,j)]\land\tvar$ and $\mathfrak M_\psi,u'\models \neg\evar\land[\bvar=(i,j)]\land\tvar'$.
By \eqref{acont}, $[\avar=(i,j)]$ is true at both $u$ and $u'$, and so $\grid$ is true at both $u$ and $u'$ as well.
Thus, $t=t'$ follows from \eqref{notiles} and \eqref{tileunique}.

Now we define a map $\tau$ by taking, for all $i,j<2^n$, 
$\tau(i,j)=t$ iff $C$ has an $R$-successor $w$ with $\mathfrak M_\varphi,w\models [\bvar=(i,j)]\land\tvar$.
By {\bf (t1)} and {\bf (t2)}, $\tau$ is well-defined. We claim that 
\begin{description}
\item[{\bf (t3)}]
for all $i,j<2^n$, if $\tau(i,j)=t$ and $\tau(i\opluss 1,j)=t'$ then $(t,t')\in H$, and

\item[{\bf (t4)}]
for all $i,j<2^n$, if $\tau(i,j)=t$ and $\tau(i,j\opluss 1)=t'$ then $(t,t')\in V$.
\end{description}
Indeed, for {\bf (t3)}, by the definition of $\tau$ and $\varrho$-bisimilarity, there are $R$-successors $u$, $u'$ of $e_{i,j}$ such that 
$$\mathfrak M_\psi, u \models \neg\evar\land[\bvar=(i,j)]\land\tvar, \quad \mathfrak M_\psi,u'\models \neg\evar\land[\bvar=(i\opluss 1,j)]\land\tvar'.$$
By \eqref{acont}, $[\avar=(i,j)]$ is true at both $u$ and $u'$. So $\grid$ is true at $u$ and $\gridx$ is true at $u'$, and so $(t,t')\in H$ follows from \eqref{hmatch}. Condition 
{\bf (t4)} can be shown similarly, using \eqref{vmatch} in place of \eqref{hmatch}.

It follows from {\bf (t3)}, {\bf (t4)} and \eqref{init} that $\tau$ is a solution to $P$ and  $\bar{t}$, completing the proof of \eqref{iff}.
Now the theorem follows by Lemma~\ref{criterion}.
\end{proof}


\section{Discussion}\label{discussion}

Our investigation of the interpolant existence problem for weak $\KF$ and the difference logic $\DL$ is part of a research programme that aims to understand Craig interpolants for logics not enjoying the CIP. It turns out that $\wKF$ shares with standard modal logics with nominals~\cite{DBLP:journals/tocl/ArtaleJMOW23}, decidable fragments of first-order modal logics~\cite{DBLP:conf/kr/KuruczWZ23}, and the guarded and two-variable fragment of first-order logic~\cite{DBLP:conf/lics/JungW21} that interpolant existence is still decidable but computationally harder than validity. In contrast, the difference logic $\DL$ shares with normal extensions of $\mathsf{K4.3}$~\cite{DBLP:journals/corr/abs-2312-05929} that interpolant existence has the same complexity
as validity. Linear temporal logic $\mathsf{LTL}$ is another example of a logic without the CIP, for which interpolant existence is decidable~\cite{DBLP:journals/corr/PlaceZ14}. For both $\wKF$ and $\mathsf{LTL}$, establishing tight complexity bounds remains an interesting open problem.
Further open problems include the following. As our decision procedures are non-constructive, it would be of interest to develop algorithms that compute interpolants whenever they exist. Also, is it possible to establish general decidability results for interpolant existence for families of extensions of $\wKF$? Another related question is to find out which additional logical connectives would repair the CIP for $\wKF$; see~\cite{tenCatediss,DBLP:journals/corr/abs-2310-08689} for elegant answers to such  questions for modal logics with nominals and other fragments of first-order logic.

The results of this paper are also relevant to the \emph{explicit definition existence problem} (EDEP) for $L\in \{\DL,\wKF\}$:  given formulas $\varphi, \psi$ and a signature $\varrho$, decide whether there is a $\varrho$-formula $\chi$ with $\varphi \rightarrow (\psi \leftrightarrow \chi)\in L$, called an \emph{explicit $\varrho$-definition of $\psi$ modulo $\varphi$ in $L$}. The EDEP reduces trivially to validity for logics enjoying the projective Beth definability property~\cite{MGabbay2005-MGAIAD}, which is not the case for $\DL$ and $\wKF$. In fact, one can prove in exactly the same way as for fragments of first-order modal logics~\cite{DBLP:conf/kr/KuruczWZ23} that, for $L\in \{\DL,\wKF\}$, the IEP and EDEP are polynomial-time reducible to each other.  
Thus, our results above also provide complexity bounds for the EDEP in $\DL$ and $\wKF$.


\medskip
\noindent
{\bf Acknowledgements.}
Thanks are due to the anonymous referees for their comments on the draft version of this paper.



\begin{thebibliography}{10}
\expandafter\ifx\csname url\endcsname\relax
  \def\url#1{\texttt{#1}}\fi
\expandafter\ifx\csname urlprefix\endcsname\relax\def\urlprefix{URL }\fi
\newcommand{\enquote}[1]{``#1''}

\bibitem{DBLP:journals/tocl/ArtaleJMOW23}
Artale, A., J.~C. Jung, A.~Mazzullo, A.~Ozaki and F.~Wolter, \emph{Living
  without {B}eth and {C}raig: Definitions and interpolants in description and
  modal logics with nominals and role inclusions}, {ACM} Trans. Comput. Log.
  \textbf{24} (2023), pp.~34:1--34:51.

\bibitem{DL-Textbook}
Baader, F., I.~Horrocks, C.~Lutz and U.~Sattler, \enquote{An Introduction to
  Description Logic,} Cambridge University Press, 2017.

\bibitem{DBLP:conf/tbillc/BezhanishviliEG09}
Bezhanishvili, G., L.~Esakia and D.~Gabelaia, \emph{Spectral and {$T_0$}-spaces
  in d-semantics}, in: N.~Bezhanishvili, S.~L{\"{o}}bner, K.~Schwabe and
  L.~Spada, editors, \emph{Logic, Language, and Computation - 8th International
  Tbilisi Symposium on Logic, Language, and Computation, TbiLLC 2009. Revised
  Selected Papers},  Lecture Notes in Computer Science  \textbf{6618} (2009),
  pp. 16--29.

\bibitem{DBLP:journals/ndjfl/BezhanishviliGJ11}
Bezhanishvili, G., S.~Ghilardi and M.~Jibladze, \emph{An algebraic approach to
  subframe logics. {M}odal case}, Notre Dame J. Formal Log. \textbf{52} (2011),
  pp.~187--202.

\bibitem{DBLP:books/daglib/0030819}
Chagrov, A.~V. and M.~Zakharyaschev, \enquote{Modal Logic,}  Oxford logic
  guides  \textbf{35}, Oxford University Press, 1997.

\bibitem{DBLP:journals/jsyml/Rijke92}
de~Rijke, M., \emph{The modal logic of inequality}, J. Symb. Log. \textbf{57}
  (1992), pp.~566--584.

\bibitem{Esakia2001}
Esakia, L., \emph{Weak transitivity---a restitution}, Logical Investigations
  \textbf{8} (2001), pp.~244--255, (In Russian).

\bibitem{DBLP:journals/jsyml/Fine74}
Fine, K., \emph{Logics containing {$\mathsf{K4}$.} part {I}}, J. Symb. Log.
  \textbf{39} (1974), pp.~31--42.

\bibitem{MGabbay2005-MGAIAD}
Gabbay, D.~M. and L.~Maksimova, \enquote{Interpolation and Definability: Modal
  and Intuitionistic Logics,} Oxford University Press, 2005.

\bibitem{goranko20075}
Goranko, V. and M.~Otto, \emph{Model theory of modal logic}, in: P.~Blackburn,
  J.~van Benthem and F.~Wolter, editors, \emph{Handbook of Modal Logic},
  Studies in logic and practical reasoning  \textbf{3}, Elsevier, 2007 pp.
  249--329.

\bibitem{DBLP:conf/lics/JungW21}
Jung, J.~C. and F.~Wolter, \emph{Living without {B}eth and {C}raig: Definitions
  and interpolants in the guarded and two-variable fragments}, in:
  \emph{Proceedings of the 36th Annual {ACM/IEEE} Symposium on Logic in
  Computer Science, {LICS} 2021} (2021), pp. 1--14.

\bibitem{Karpenko&Maksimova2010}
Karpenko, A. and L.~Maksimova, \emph{Simple weakly transitive modal algebras},
  Algebra and Logic \textbf{49} (2010), pp.~233--245.

\bibitem{kudinov2024filtrations}
Kudinov, A. and I.~Shapirovsky, \emph{Filtrations for {$\mathsf{wK4}$} and its
  relatives}, CoRR \textbf{abs/2401.00457} (2024).

\bibitem{DBLP:conf/kr/KuruczWZ23}
Kurucz, A., F.~Wolter and M.~Zakharyaschev, \emph{Definitions and (uniform)
  interpolants in first-order modal logic}, in: P.~Marquis, T.~C. Son and
  G.~Kern{-}Isberner, editors, \emph{Proceedings of the 20th International
  Conference on Principles of Knowledge Representation and Reasoning, {KR}
  2023}, 2023, pp. 417--428.

\bibitem{DBLP:journals/corr/abs-2312-05929}
Kurucz, A., F.~Wolter and M.~Zakharyaschev, \emph{A non-uniform view of {C}raig
  interpolation in modal logics with linear frames}, CoRR
  \textbf{abs/2312.05929} (2023).

\bibitem{DBLP:conf/amast/Marx98}
Marx, M., \emph{Interpolation in modal logic}, in: \emph{Proceedings of the 7th
  International Conference on Algebraic Methodology and Software Technology,
  {AMAST} 1998} (1998), pp. 154--163.

\bibitem{McKinsey&Tarski44}
McKinsey, J. and A.~Tarski, \emph{The algebra of topology}, Annals of
  Mathematics \textbf{45} (1944), pp.~141--191.

\bibitem{DBLP:journals/corr/PlaceZ14}
Place, T. and M.~Zeitoun, \emph{Separating regular languages with first-order
  logic}, Log. Methods Comput. Sci. \textbf{12} (2016).

\bibitem{Segerberg1980-SEGANO-2}
Segerberg, K., \emph{A note on the logic of elsewhere}, Theoria \textbf{46}
  (1980), pp.~183--187.

\bibitem{DBLP:journals/tocl/Shapirovsky22}
Shapirovsky, I., \emph{Satisfiability problems on sums of {K}ripke frames},
  {ACM} Trans. Comput. Log. \textbf{23} (2022), pp.~15:1--15:25.

\bibitem{tenCatediss}
ten Cate, B., \enquote{Model theory for extended modal languages,} Phd,
  Universiteit van Amsterdam (2005).

\bibitem{DBLP:journals/corr/abs-2310-08689}
ten Cate, B. and J.~Comer, \emph{Craig interpolation for decidable first-order
  fragments}, CoRR \textbf{abs/2310.08689} (2024).

\bibitem{vonWright79}
{von Wright}, G., \emph{A modal logic of place}, in: E.~Sosa, editor, \emph{The
  philosophy of Nicolas Rescher}, Dordrecht, 1979 pp. 65--73.

\end{thebibliography}


\end{document}